\newcommand{\argmax}{\mathop{\rm argmax}}
\DeclareMathOperator{\reg}{\mathrm{Reg}}
\newcommand{\revision}[1]{\textcolor{black}{#1}}
\definecolor{dkgreen}{rgb}{0,0.6,0}
\definecolor{gray}{rgb}{0.5,0.5,0.5}
\definecolor{mauve}{rgb}{0.58,0,0.82}
\definecolor{bblue}{HTML}{4F81BD}
\definecolor{rred}{HTML}{C0504D}
\definecolor{ggreen}{HTML}{9BBB59}
\definecolor{ppurple}{HTML}{9F4C7C}
\definecolor{processblue}{cmyk}{0.96,0,0,0}
\newcommand{\alg}{\mathrm{Alg}}
\tiny\color{gray},
  \providecommand\BibTeX{{%
    \normalfont B\kern-0.5em{\scshape i\kern-0.25em b}\kern-0.8em\asTeX}}}
\begin{document}


\title{UDO: Universal Database Optimization\\ using Reinforcement Learning}

\author{Junxiong Wang}
\email{junxiong@cs.cornell.edu}
\affiliation{%
\institution{Cornell University}
\city{Ithaca}
\state{NY}
\country{USA}
\postcode{14850}
}

\author{Immanuel Trummer}
\email{itrummer@cornell.edu}
\affiliation{%
  \institution{Cornell University}
  \city{Ithaca}
  \state{NY}
  \country{USA}
  \postcode{14850}
}

\author{Debabrota Basu}
\email{debabrota.basu@inria.fr}
\affiliation{%
  \institution{Scool, Inria Lille- Nord Europe}
  \city{Lille}
  \state{}
  \country{France}
  \postcode{59650}
}

\begin{abstract}
UDO is a versatile tool for offline tuning of database systems for specific workloads. UDO can consider a variety of tuning choices, reaching from picking transaction code variants over index selections up to database system parameter tuning. UDO uses reinforcement learning to converge to near-optimal configurations, creating and evaluating different configurations via actual query executions (instead of relying on simplifying cost models). To cater to different parameter types, UDO distinguishes heavy parameters (which are expensive to change, e.g. physical design parameters) from light parameters. Specifically for optimizing heavy parameters, UDO uses reinforcement learning algorithms that allow delaying the point at which the reward feedback becomes available. This gives us the freedom to optimize the point in time and the order in which different configurations are created and evaluated (by benchmarking a workload sample). UDO uses a cost-based planner to minimize reconfiguration overheads. For instance, it aims to amortize the creation of expensive data structures by  consecutively evaluating configurations using them. We evaluate UDO on Postgres as well as MySQL and on TPC-H as well as TPC-C, optimizing a variety of light and heavy parameters concurrently.
\end{abstract}

\maketitle

\section{Introduction}

We introduce \textit{UDO}, the \textit{Universal Database Optimizer}. UDO is an offline tuning tool that optimizes various kinds of tuning choices (e.g., physical design decisions as well as settings for database system configuration parameters), given an example workload and a tuning time limit. UDO does not rely on simplifying cost models to assess the quality of tuning options. Also, it does not require any kind of training data upfront. Instead, it relies only on feedback obtained via sample runs, after creating a tuning configuration to evaluate. This makes the optimization process expensive but avoids sub-optimal choices due to erroneous cost estimates, which are otherwise common~\cite{Borovica2012}. 

Given the tradeoff realized by UDO (i.e., high-quality, high-overhead optimization), we see two primary use cases. First, UDO is useful in scenarios where a configuration, obtained via expensive optimization, can be used over extended periods. This is possible if data and query workload properties do not change too frequently. Also, UDO is useful as an analysis tool for other tuning approaches. For instance, as UDO does not rely on cost or cardinality models, it can be used to uncover weaknesses in other recommender tools that are based on the latter. In this scenario, UDO adopts a similar role as previously proposed methods for query optimizer testing~\cite{Chaudhuri2009b, Trummer2019}, which generate guaranteed optimal plans via an expensive process (but are specific to query plans, as opposed to other tuning choices).


UDO operates on various types of tuning parameters, which are traditionally handled by separate tuning tools. For instance, in our experiments, we consider optimization of transaction query orders~\cite{Yan2016}, index selections~\cite{Ding2019, Chaudhuri2004}, as well as database system configuration parameters~\cite{Zhang1910, Zhang2019a}. Considering various parameter types together can be advantageous as optimal choices for one parameter type may depend on settings for other parameters (e.g., we may disable sequential scans, a configuration parameter, only if specific indexes are created). Hereby, we use the generic term \textbf{Parameter} for each tuning choice and the term \textbf{Configuration} for an assignment from parameters to values. UDO handles all parameters by a unified approach.

UDO explores the search space iteratively: selecting configurations to try, creating them (e.g., creating index structures or setting system parameters as specified by the configuration), and evaluating their performance on a workload sample. Evaluation is flexible to incorporate multiple metrics such as throughput or latency. We demonstrate optimization with both metrics on different database systems (Postgres and MySQL) and standard benchmarks (TPC-C and TPC-H). UDO uses \textit{Reinforcement Learning (RL)} to determine which configurations to try next. Improvement in performance measurements translate into reward values that guide an RL agent during search towards actions, i.e. configurations, that maximize the accumulated reward, i.e. the resulting performance. 

RL has been used previously for optimizing database system configuration parameters~\cite{Li2018, Zhang2019a} in particular. The main novelty of UDO lies in the fact that it broadens the scope of optimization to a much larger class of parameters. This becomes particularly challenging due to what we call \textbf{Heavy Parameters} (we distinguish them from \textbf{Light Parameters} in the following). For heavy parameters, it is expensive to change the parameter value. For instance, parameters that relate to index creations are expensive to change. Creating an index, in particular a clustered index, may take an amount of time that dominates query or transaction evaluation time for a small workload sample. Similarly, configuration parameters requiring a database server restart are relatively expensive to change. As we show in our experiments, a na\"ive RL approach is limited by costs of changing heavy parameters. This incurs high costs per iteration and slows down convergence.

UDO avoids this pitfall by giving heavy parameters special treatment. \textit{UDO separates heavy parameters from light ones and uses different reinforcement learning algorithms to optimize them}. Specifically for heavy parameters, it uses an RL algorithm that can adjust with delays until reward values for previous choices become available. We leverage such delayed feedbacks as follows. All configurations selected by the RL algorithm are forwarded to a \textit{planning component}. The planning component decides, when and in which order to create and to evaluate configurations. Depending on those choices, we are able to amortize cost for changing heavy parameters over the evaluation of many similar configurations. For instance, it allows us to create an expensive index once to evaluate multiple similar configurations that all include the index. The alternative approach, which is alternating between configurations that use or do not use the index requiring multiple index creations and drops, is less efficient.

For the current setting of heavy parameters, we use RL again to find optimal settings for light parameters. Of course, optimal settings for light parameters depend on the values for heavy parameters. UDO takes that into account and models the optimization of light parameters for each heavy parameter setting as a separate Markov Decision Process (MDP), to which an RL algorithm is applied to. In contrast to heavy parameters, we use a no-delay RL algorithm to converge faster to near-optimal settings for light parameters.

We propose a new Monte Carlo Tree Search (MCTS) variant, called delayed-Hierarchical Optimistic Optimization (HOO), that can be used for optimizing both, heavy and light parameters (with and without delays). We show that UDO converges to near-optimal configurations, given enough optimization time, when using that approach.


We demonstrate via experiments that the resulting system finds better configurations, compared to baselines, given the same amount of optimization time. We consider multiple standard benchmarks (TPC-H and TPC-C), multiple optimization metrics (throughput and latency), as well as different database management systems (Postgres and MySQL). In summary, our original, scientific contributions are the following.

\begin{itemize}
    \item We introduce an approach for optimizing various database tuning decisions using reinforcement learning. This approach is characterized by a factorisation of heavy and light parameters, the use of RL algorithms accepting delayed feedback, and a planner component that reduces re-configuration overheads by carefully planning evaluation orders.
    \item We experimentally demonstrate that the UDO system finds better configurations than baselines, given the same amount of optimization time. Our experiments cover various benchmarks and metrics.
    \item We propose a new MCTS variant, delayed-HOO, that can be used to optimize light and heavy parameters. We show that UDO converges to near-optimal solutions, using that approach, under moderately simplifying assumptions.
\end{itemize}



The remainder of this paper is organized as follows. First, in Section~\ref{sec:model}, we introduce our formal problem model and terminology used throughout the paper. Then, in Section~\ref{sec:overview}, we give a high-level overview of the UDO system. We analyze UDO formally in Section~\ref{sec:analysis}. \revision{In Section~\ref{sec:evaluation}, we describe mechanisms by which UDO evaluates batches of configurations efficiently. In Section~\ref{sec:rlalgs}, we introduce UDO's learning algorithms and analyze them formally in Section~\ref{sec:analysis}.} Then, in Section~\ref{sec:experiments}, we report results of our experimental evaluation. We discuss the related works in Section~\ref{sec:related} before concluding the paper.



\section{Formal Model}
\label{sec:model}

We introduce our problem model and associated terminology here.

\begin{definition}
A \textbf{Tuning Parameter} represents an atomic decision,  influencing performance of a database management system for a specific workload. It is associated with a (discrete) \textbf{Value Domain}, representing admissible parameter values. It may be subject to \textbf{Constraints}, restricting its values based on the values of other tuning parameters.
\end{definition}

We use the term ``parameter'' in a broad sense, encompassing system configuration parameter settings as well physical design decisions. In the following, we give examples for tuning parameters.

\begin{example}
Considering a set of candidate indices for a given database, we associate one tuning parameter with each candidate. Such index-related parameters have a binary value domain, representing whether the index is created or not. Equally, we can introduce a tuning parameter to represent the  \verb|random_page_cost| configuration parameter of the Postgres system (together with a set of values to consider). Finally, we may associate a query in a transaction template with a tuning parameter, representing the position within the template at which it is evaluated (the set of admissible positions is restricted via control flow and data dependencies).
\end{example}

\begin{definition}
Given fixed, ordered parameters, a \textbf{Configuration} $c$ is a vector, assigning a specific value to each parameter. The \textbf{Configuration Space} $C$ is the set of all possible configurations.
\end{definition}

Our goal is to find configurations that optimize a benchmark.

\begin{definition}\label{def:fmetric}
A \revision{\textbf{Benchmark Metric}} $f$ maps a configuration $c \in C$ to a real-valued performance result (i.e., $f:C\mapsto\mathbb{R}$)\revision{, which represents the performance of a configuration according to a specific metric for a specific benchmark}.
Higher performance results are preferable. \revision{We assume that $f$ is stochastic (i.e., evaluating the same configuration twice may not yield exactly the same performance).}
\end{definition}

\revision{Our definition of $f$ is deliberately generic, covering different types of benchmarks and metrics. A few examples follow.}


\begin{example}\label{ex:benchmarks}
In our experiments, we \revision{use the following two benchmark metrics among others. We consider a benchmark metric $f_1$ that maps configurations to the average throughput, measured over a fixed time period, when processing TPC-C transactions generated randomly according to a fixed distribution. Also, we consider a benchmark metric $f_2$ that maps configurations to a weighted sum between disk space $d$ consumed (e.g., for created indexes) and run time $t$ of all TPC-H queries (i.e., $f(c)=-d-\sigma\cdot t$ where $c$ is a configuration and $\sigma\in\mathbb{R}^+$ a user-defined scaling factor). Both benchmark metrics are implemented as a script (a black box from UDO's perspective) that returns a numerical performance result.}
\end{example}


We present a system, UDO, that solves the following problem.

\begin{definition}
An instance of \textbf{Universal Database Optimization} is characterized by a benchmark \revision{metric $f$} and a configuration space $C$. The goal is to find an optimal configuration $c^*$, \revision{maximizing the stochastic benchmark metric $f$ in expectation (i.e., $c^*=\argmax_{c\in C}\mathbb{E}[f(c)]$).} We denote the optimal expected performance (that of $c^*$) as $f^*$.
\end{definition}




For iterative approaches, the problem specification may also include a user-defined timeout for optimization. The qualification ``Universal'' attest to the fact that our approach is broadly applicable, in terms of parameter types, workloads, and performance metrics. We \revision{map} an UDO instance to multiple \revision{episodic} Markov Decision Processes, \revision{using the following definition}, and solve UDO using RL. 

\revision{\begin{definition}\label{def:emdp}
An \textbf{Episodic Markov Decision Process} (MDP) is defined by a tuple $\langle \mathcal{S},\mathcal{A},\mathcal{T},\mathcal{R},\mathcal{S}_D,\mathcal{S}_E\rangle$ where $\mathcal{S}$ is the state space, $\mathcal{A}$ a set of actions, and $\mathcal{T}:\mathcal{S}\times\mathcal{A}\mapsto\mathcal{S}$ a transition function linking state-action pairs to new states. $\mathcal{R}:\mathcal{S}\mapsto\mathbb{R}$ is a reward function mapping states to a reward value. We consider deterministic transitions but stochastic rewards. Optimization models an agent that performs steps. In each step, the agent selects an action, receives a reward, and transitions to the next state, based on the selected action. Optimization is divided into episodes. In each episode, the agent starts in state $\mathcal{S}_D\in\mathcal{S}$. The episode ends once it reaches one of the end states $\mathcal{S}_E\subseteq\mathcal{S}$. The goal is to find a policy (here: a sequence of actions as we consider deterministic transitions) that maximizes expected rewards per step.\end{definition}}

\revision{We introduce two scenario-specific instances of this formalism, associated with different parameter types.}

\begin{definition}\label{def:heavylight}
We distinguish \textbf{Heavy} and \textbf{Light Parameters}, based on the overheads associated with changing their values. Heavy parameters have high reconfiguration overheads, light parameters have negligible overheads. \revision{We denote by $C_H$ the configuration space for heavy parameters (i.e., a set of vectors representing all possible heavy parameter settings). We denote by $C_L$ the configuration space for light parameters. Hence, the entire configuration space $C$ can be written as $C=\{c_H\circ c_L|c_H\in C_H, c_L\in C_L\} = C_H \times C_L$ (assuming that heavy parameters are ordered before light parameters and writing vector concatenation as $\circ$).}
\end{definition}

\revision{Currently, UDO considers all parameters representing physical data structures such as indexes as heavy (as changing such parameters means creating or dropping the associated data structure). Also, UDO considers parameters as heavy that require a database server restart to make value changes effective. The other parameters are considered light.}


\revision{\begin{definition}
For the \textbf{Heavy Parameter MDP}, states correspond to configurations for heavy parameters (i.e., $\mathcal{S}\subseteq C_H$) and each action changes one heavy parameter to a new value (i.e., an action is defined as a pair $\langle p,v\rangle$ representing parameter $p$ and new value $v$). The transition function maps a configuration (i.e., state) with a parameter value change (i.e., action) to a new configuration, reflecting the changed value. The start state $\mathcal{S}_D\in C_H$ represents the default configuration (i.e., no created indices and default values for all system parameters). All states reachable from the start state with a given number of actions are end states (we typically use a threshold of four actions). The reward function $\mathcal{R}$ is scenario-specific and based on the benchmark metric $f$. The reward for a state representing heavy parameter configuration $c_H$ is proportional to $\argmax_{c_L\in C_L}f(c_H\circ c_L)$, i.e.\ to the value of the benchmark metric when combining $c_H$ with the best possible configuration $c_L$ for light parameters. We scale raw rewards by subtracting rewards for the default configuration (e.g., if $f$ measures throughput for a specific benchmark, UDO considers the throughput improvement compared to default settings as reward function).
\end{definition}}

\revision{The definition above uses optimal configurations for light parameters, leading to the second MDP version.}

\revision{\begin{definition}
A \textbf{Light Parameter MDP} $\mathcal{M}_L[c_h]$ is introduced for each heavy parameter configuration $c_h$ (in practice, we limit ourselves to configurations explored by UDO). Its states represent configurations for light parameters, its actions represent value changes for light parameters (analogue to the previous definition). The start state represents default values for all light parameters and end states are defined by a fixed number of light parameter changes, compared to the default. The reward function $\mathcal{R}_L[c_H]$ is defined as $\mathcal{R}_L[c_H](c_L)=f(c_H\circ c_L)-f_D$ where $f_D$ is performance of the default configuration.
\end{definition}}

As shown above, we model optimization for light parameters as a family of MDPs, where each MDP is associated with a specific heavy parameter configuration. Our problem model assumes that only the reward function (i.e., performance) depends on heavy parameters. It is possible to extend the definition above to cover cases where admissible values for light parameters depend on currently chosen heavy parameters. In that case, states, transitions, and actions must be instantiated for specific heavy parameter settings as well. The following example illustrates the interplay between heavy and light parameter MDPs.


\tikzstyle{hstate}=[rectangle, fill=blue!10]
\tikzstyle{lstate}=[rectangle, fill=green!10]
\tikzstyle{transition}=[->, black, thick]
\tikzstyle{mapping}=[->, black, thick, dashed]

\begin{figure}
    \centering
    \begin{tikzpicture}
        \node[hstate] (root) at (0,0) {$\langle0,0,?\rangle$};
        \node[hstate] (hstate1) at (-3,0) {$\langle1,0,?\rangle$};
        \node[hstate, fill=red!50] (hstate2) at (3,0) {$\langle0,1,?\rangle$};
        \node[lstate] (l1) at (-3,-1) {$\langle1,0,2MB\rangle$};
        \node[lstate] (l2) at (-3,-2) {$\langle1,0,10MB\rangle$};
        \node[lstate, fill=red!50] (l3) at (-1,-1) {$\langle1,0,25MB\rangle$};
        \node[lstate] (l4) at (-1,-2) {$\langle1,0,50MB\rangle$};
        \node[lstate] (l5) at (3,-1) {$\langle0,1,2MB\rangle$};
        \node[lstate] (l6) at (3,-2) {$\langle0,1,10MB\rangle$};
        \node[lstate] (l7) at (1,-1) {$\langle0,1,25MB\rangle$};
        \node[lstate, fill=red!50] (l8) at (1,-2) {$\langle0,1,50MB\rangle$};
        \draw[transition] (root) -- (hstate1);
        \draw[transition] (root) -- (hstate2);
        \draw[transition] (l1) -- (l2);
        \draw[transition] (l1) -- (l3);
        \draw[transition] (l1) -- (l4);
        \draw[transition] (l5) -- (l6);
        \draw[transition] (l5) -- (l7);
        \draw[transition] (l5) -- (l8);
        \draw[mapping] (hstate1) -- (l1);
        \draw[mapping] (hstate2) -- (l5);
    \end{tikzpicture}
    \caption{Extract of heavy (top) and light (bottom) parameter MDPs for a space with two heavy and one light parameter. Optimal states for each MDP are marked up in red.}
    \label{fig:mdp}
\end{figure}
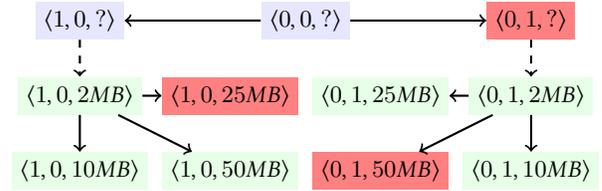

\begin{example}\label{ex:mdps}
Figure~\ref{fig:mdp} illustrates \revision{part of a} two-level MDP. In the illustrated scenario, the configuration space contains two heavy parameters (e.g., index creation decisions) as well as one light parameter (e.g., the maximal amount of main memory used per operator). Rectangles represent states in the figure and are annotated with configuration vectors (reporting parameter in the aforementioned order). The upper part of the figure illustrates the heavy parameter MDP (note that the light parameter is not specified). Solid lines mark transitions due to actions changing the configuration. Dashed lines mark mappings between heavy parameter states and the start state of the associated light parameter MDP. \revision{As index-related parameters are binary, the heavy parameter MDP has four states (out of which three are shown).} In total, the figure illustrates three MDPs (the one for heavy parameters and light parameter MDPs for two heavy configurations). Optimal states are marked up in red, showing that the optimal settings for light parameters may depend on the heavy parameter settings. Identifying optimal settings for heavy parameters requires obtaining optimal, associated settings for light parameters first.
\end{example}

\section{System Overview}
\label{sec:overview}

\tikzstyle{processstep}=[fill=blue!10, draw, minimum width=6cm]
\tikzstyle{processflow}=[draw, ultra thick, ->]
\tikzstyle{datasent}=[font=\itshape]
\tikzstyle{complabel}=[red, font=\bfseries]

\begin{figure}[t]
    \centering
    \begin{tikzpicture}
    
        \draw[rounded corners, fill=gray!5] (-3,-0.5) ++ (-0.5, 0.5) rectangle ($ (3,-5.5) + (0.3,-0.1) $);
        \node[font=\bfseries] at (0,-0.25) {Universal Database Optimizer};
        \node[datasent] (udoinput) at (0,0.5) {Benchmark Metric, Configuration Space, Time Budget};
        \node[datasent] (udooutput) at (0,-6.1) {Best Configuration};
        \draw[processflow] (udoinput.south) -- ++ (0,-0.2);
        \draw[processflow] (udooutput.north) ++ (0,0.2) -- ++ (0,-0.2);
    
        \node[processstep] (rlheavy) at (0,-1) {RL for Heavy Parameters};
        \node[processstep] (pick) at (0,-2) {Pick Configurations to Evaluate};
        \node[processstep] (order) at (0,-3) {Order Selected Configurations};
        \node[processstep] (rllight) at (0,-4) {RL for Light Parameters};
        \node[processstep] (dbms) at (0,-5) {Evaluate Configuration};
        
        \draw[processflow] (rlheavy.south east) -- (pick.north east);
        \draw[processflow] (pick.south east) -- (order.north east);
        \draw[processflow] (order.south east) -- (rllight.north east);
        \draw[processflow] (rllight.south east) -- (dbms.north east);
        \draw[processflow] (dbms.north west) -- (rllight.south west);
        \draw[processflow] (dbms.north west) -- ++ (-0.2,0) -- ++ (0,3.74) -- (rlheavy.west);
        
        \node[datasent] at (0,-1.5) {Configuration for Heavy Parameters+Deadline};
        \node[datasent] at (0,-2.5) {Configurations for Heavy Parameters (Set)};
        \node[datasent] at (0,-3.5) {Configurations for Heavy Parameters (List)};
        \node[datasent, anchor=east] at (3,-4.5) {Configuration for All Parameters};
        \node[datasent, anchor=west] at (-3,-4.5) {Result};
        
        \node[complabel] at (2.5,-1) {(A)};
        \node[complabel] at (2.5,-2) {(B)};
        \node[complabel] at (2.5,-3) {(C)};
        \node[complabel] at (2.5,-4) {(D)};
        \node[complabel] at (2.5,-5) {(E)};

    \end{tikzpicture}
    \caption{\revision{Overview of UDO system (rectangles represent processing steps, arrows represent data flow).}}
    \label{fig:process}
\end{figure}
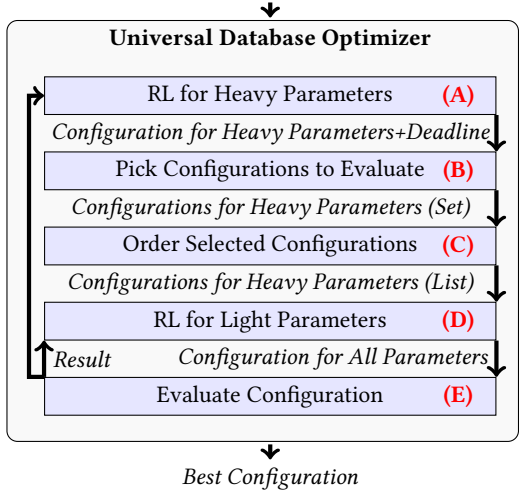


\revision{Figure~\ref{fig:process} shows an overview of UDO and the interplay between its components. The input to UDO is a benchmark metric to optimize, a configuration space, and an optimization time budget. The configuration space is specified as a set of index candidates to consider, a set of database system parameters with alternative values to try, and (optionally) a set of alternative versions for each query or transaction template. UDO considers index parameters as heavy and the others as light. The output is the best configuration found until the time limit.}


\revision{UDO iterates until the time limit is reached\footnote{\revision{Instead of a fixed optimization time budget, we could use other termination criteria as well. For instance, the algorithm could terminate once a given number of iterations does not yield improvements above a configurable threshold.}}. In each iteration, UDO first chooses a configuration of heavy parameters to explore (Component~A). UDO uses reinforcement learning for this decision, balancing the need for exploration (analyzing configurations about which little information is available) and exploitation (refining configurations that seem to well) in a principled manner. Evaluating a new configuration for heavy parameters can however be expensive. It involves changing the current database configuration to the one to evaluate, e.g.\ by creating indexes. Doing so becomes cheaper if the current configuration is close to the one to evaluate. Hence, UDO tries to optimize the point in time at which heavy parameter configurations are evaluated. UDO uses a specialized reinforcement learning algorithm that does not expect evaluation results immediately after selecting a configuration. Instead, it allows for a certain delay (measured as the number of iterations between selection and evaluation result). Selected configurations for heavy parameters are added to a buffer, associated with a deadline until which the result must become available. Note that the learning algorithm does not  consider the current database state for \textit{deciding which configuration to explore} (doing so may prevent UDO from finding promising configurations that are far from the current one). Instead, it merely creates opportunities for cost reductions by other system components.}

\revision{In each iteration, UDO selects a set of heavy parameter configurations to evaluate from the aforementioned buffer (Component B). Configurations are selected if either their deadline has been reached (in this case, there is no choice) or if their evaluation is cheaper than usual (e.g., because they share indexes with configurations that must be evaluated). Selected configurations are ordered for evaluation (Component C). The goal of evaluation is to reduce reconfiguration cost by placing similar configurations consecutively. For instance, if configurations with similar indexes are evaluated consecutively, some index creation cost can be amortized.}

\revision{Next, UDO selects values for light parameters (Component D). The best configuration for light parameters may depend on the heavy parameter configuration. For instance, we may want to enable or disable specific join algorithms (by setting parameters such as enable\_nestloop for Postgres), depending on which indexes are available. For specific configurations of heavy parameters, UDO learns suitable settings for light parameters via reinforcement learning. Here, reconfiguration is cheap. Hence, UDO uses a standard reinforcement learning algorithm without delays. Light parameters are optimized for the current heavy configuration for a fixed number of iterations of the latter learning algorithm. Note that statistics for light parameters are saved and will be used as starting point if the same heavy configuration is selected again. A fully specified configuration (i.e., for light and heavy parameters) is evaluated via the benchmark metric. This involves executing a script that executes a sample workload and returns the performance metric to optimize. The evaluation results are used to update the statistics of the two learning algorithms (Components D and A).}

\begin{algorithm}[t!]
\caption{UDO main function.}
\label{alg:twoLevelVar}
\renewcommand{\algorithmiccomment}[1]{// #1}
\begin{small}
\begin{algorithmic}[1]
\State \textbf{Input:} Benchmark \revision{metric $f$}, configuration space $C$, RL algorithms $\alg_H$ and $\alg_L$ for heavy and light parameter optimization
\State \textbf{Output:} a suggested configuration for best performance
\Function{UDO}{\revision{$f$}$, C, \alg_H, \alg_L$}
\State \Comment{Divide into heavy ($C_H$) and light ($C_L$) parameters}
\State $\langle C_H, C_L \rangle \gets$ \Call{SSA.SplitParameters}{$C$}
\State \Comment{Until optimization time runs out}
\For{$t\gets1,\ldots,\alg_H.Time$}
\State \Comment{Select next heavy parameter configuration}
\State $c_{H,t}\gets $\Call{RL.Select}{$\alg_H,C_H, c_{H, t-1}$}
\State \Comment{Submit configuration for evaluation}
\State \Call{EVAL.Submit}{$c_{H,t},t+\alg_H.maxDelay$}
\State \Comment{Receive newly evaluated light configurations}
\State $E\gets$\Call{EVAL.Receive}{$\alg_L,$\revision{$f$}$,C_L,t$}
\State \Comment{Update statistics for heavy parameters}
\State \Call{RL.Update}{$\alg_H,E$}
\EndFor
\State \Return best obtained configuration
\EndFunction
\end{algorithmic}
\end{small}
\end{algorithm}

\revision{Algorithm~\ref{alg:twoLevelVar} describes the main loop, executed by UDO, in more formal detail. Beyond benchmark metric and configuration space, it obtains two parameters specifying hyper-parameters for the two reinforcement learning algorithms used. These include optimization time as well as other parameters (e.g., the maximal amount of allowed delay) whose impact we analyze in Section~\ref{sec:experiments}. Users only need to specify optimization time while defaults are available for the other algorithm parameters (hence, Figure~\ref{fig:process} only references the former parameter).}

\revision{Algorithm~\ref{alg:twoLevelVar} first classifies parameters as heavy or light (Line~5). We use a simple heuristic and classify parameters requiring index creations or database server restarts as heavy, the other ones as light. Next, Algorithm~\ref{alg:twoLevelVar} iterates until optimization time runs out. It selects interesting heavy parameter configurations to evaluate via reinforcement learning (Line~9). It submits requests for evaluation, setting a deadline until which the result must be available (Line~11). It receives evaluation results for previously submitted requests (potentially, but not necessarily, including the one submitted in the current iteration). The results are used to update the statistics for learning (Line~15). Finally, the best configuration is returned.}

\revision{We discuss the sub-functions related to evaluating configurations (Functions~\textproc{EVAL.Submit} and \textproc{EVAL.Receive}) in Section~\ref{sec:evaluation}. In Section~\ref{sec:rlalgs}, we discuss the learning algorithms used (Functions~\textproc{RL.Select} and \textproc{RL.Update}).}

\section{Evaluating Configurations}
\label{sec:evaluation}

\begin{algorithm}[t!]
\caption{EVAL: Functions for evaluating configurations.}
\label{alg:evalVar}
\renewcommand{\algorithmiccomment}[1]{// #1}
\begin{small}
\begin{algorithmic}[1]
\State \Comment{Global variable representing evaluation requests}
\State $R\gets\emptyset$
\vspace{0.15cm}
\State \textbf{Input:} heavy configuration $c_H$ to evaluate and time $t$
\State \textbf{Effect:} adds new evaluation request 
\Procedure{EVAL.Submit}{$c_H,t$}
\State $R\gets R\cup\{\langle c_H,t\rangle\}$
\EndProcedure
\vspace{0.15cm}
\State \textbf{Input:} RL algorithm $\alg_L$, benchmark \revision{metric $f$}, time $t$, and space $C_L$
\State \textbf{Output:} evaluated configurations with reward values
\Function{EVAL.Receive}{$\alg_L,$\revision{$f$}$,C_L,t$}
\State \Comment{Choose configurations from $R$ to evaluate now}
\State $N\gets$\Call{PickConf}{$R,t$}
\State \Comment{Remove from pending requests}
\State $R\gets R\setminus N$
\State \Comment{Prepare evaluation plan}
\State $P\gets$\Call{PlanConf}{$N$}
\State \Comment{Collect evaluation results by executing plan}
\State $E\gets\emptyset$
\For{$s\in P.steps$}
\State \Comment{Prepare evaluation of next configurations}
\State \Call{ChangeConfig}{$s.hconf$}
\State \Comment{Find (near-)optimal light parameter settings}
\State $c_L\gets$\Call{RL.Optimize}{$\alg_L,s.hconf,C_L,$\revision{$f$}}
\State \Comment{Take performance measurements on benchmark}
\State $b\gets$\Call{Evaluate}{\revision{$f$}$,s.hconf,c_L$}
\State \Comment{Add performance result to set}
\State $E\gets E\cup\{\langle c_L,s.hconf,b\rangle\}$
\EndFor
\State \Comment{Return evaluation results}
\State \Return{$E$}
\EndFunction
\end{algorithmic}
\end{small}
\end{algorithm}

\revision{We discuss how configurations are selected and ordered for evaluation. In Section~\ref{sub:evaluationsoverview}, we describe the implementation of the evaluation functions invoked by Algorithm~\ref{alg:twoLevelVar}. In Section~\ref{sub:picking}, we describe how configurations to evaluate are selected (Component~B in Figure~\ref{fig:process}). In Section~\ref{sub:ordering}, we discuss the method used to order configurations to evaluate for minimal cost.}

\subsection{Evaluation Overview}
\label{sub:evaluationsoverview}


The evaluation interface offers two functions, represented in Algorithm~\ref{alg:evalVar} \revision{(and used in Algorithm~\ref{alg:twoLevelVar})}. First, it accepts evaluation requests (\textproc{EVAL.Submit}), allowing to submit configurations for evaluation, together with an evaluation deadline. Second, it allows triggering evaluations via the \textproc{EVAL.Receive} function. 
Algorithm~\ref{alg:evalVar} maintains a global variable $R$ (whose state persists across different calls to the two interface functions). This variable contains pending requests for evaluating specific configurations. Each configuration to evaluate is only partially specified (i.e., it assigns values to a subset of configuration parameters). More precisely, configurations to evaluate only contain specific values for heavy parameters. \revision{During evaluation, we learn suitable values for light parameters} to accurately assess the potential of the heavy parameter settings. Items in $R$ correspond to tuples, combining a heavy parameter configurations with an evaluation deadline. This deadline specifies the latest possible time (measured as the number of main loop iterations, as per Algorithm~\ref{alg:twoLevelVar}) at which evaluation results must be generated. The submission function (Procedure~\textproc{EVAL.Submit}) simply adds one more tuple to set variable $R$.


Calling Function~\textproc{EVAL.Receive} triggers evaluation of a subset of pending configurations. It is up to \revision{that function} itself to choose, within certain boundaries, the set of configurations to evaluate. The function returns the results of those evaluations. As input, Function~\textproc{EVAL.Receive} obtains a configuration parameter $\alg_L$, specifying the algorithm to use for optimizing light parameters (for fixed  heavy parameter values). Also, it receives the benchmark \revision{metric $f$}, the space of light \revision{configurations} $C_L$, and the current time $t$ as input. The latter is important to decide which configurations must be evaluated in the current invocation.

As a first step (Line~12), Function~\textproc{EVAL.Receive} determines the set of configurations to evaluate in the current invocation. If the time $t$ has reached the deadline of any pending configurations, those configurations must be included in that set. For other configurations, the evaluator can choose to evaluate them now or to postpone evaluation. \revision{We describe the selection mechanisms in Section~\ref{sub:picking}.} Having selected configurations to evaluate, the algorithm removes those configurations from the pending set $R$. 


Having selected a set of configurations, Algorithm~\ref{alg:evalVar} decides how to evaluate them. Function~\textproc{PlanConf} selects a plan to evaluate the given set of configurations. Evaluating configurations in the right order can save significant overheads, compared to a random permutation. In particular, ordering them allows to amortize re-configuration overheads (e.g., overheads for creating an index) over the evaluation of multiple, similar configurations. The planner function (\textproc{PlanConf}) exploits this fact and aims at minimizing cost. \revision{We describe the planning mechanism in Section~\ref{sub:ordering} in detail.} 


After selecting a plan, Algorithm~\ref{alg:evalVar} processes the plan steps in order (loop from Line~19 to Line~28). For each plan step $s$, the system first executes re-configuration actions required to evaluate specific heavy parameter settings (Line~21). Then, it selects a (near-)optimal setting of light parameters, specifically for the current configuration of heavy parameters (Line~23). Here, we invoke a reinforcement learning algorithm described via tuning parameters $\alg_L$. We discuss learning algorithms to implement this step \revision{in Section~\ref{sec:rlalgs}}. Finally, Algorithm~\ref{alg:evalVar} benchmarks the current heavy and light parameter setting (Line~25) and adds the result to the set (Line~27). All evaluation results are ultimately returned to the invoking function (Line~30).

\subsection{Picking Configurations to Evaluate}
\label{sub:picking}

\begin{algorithm}[t!]
\caption{\textproc{PickConf}: Methods for picking configurations to evaluate.}
\label{alg:pickConfigurations}
\renewcommand{\algorithmiccomment}[1]{// #1}
\begin{small}
\begin{algorithmic}[1]
\State \textbf{Input:} Evaluation requests $R$, current timestamp $t$
\State \textbf{Output:} Set of configurations to evaluate
\Function{PickConf-Threshold}{$R,t$}
\State \Comment{Was size threshold reached?}
\If{$|R|\geq\rho$}
\State \Comment{Return all requests}
\State \Return{$R$}
\Else
\State \Return{$\emptyset$}
\EndIf
\EndFunction
\vspace{0.15cm}
\State \Comment{Initialize maximal cost savings for each request}
\State $S=\emptyset$
\vspace{0.15cm}
\State \textbf{Input:} Evaluation requests $R$, current timestamp $t$
\State \textbf{Output:} Set of configurations to evaluate
\Function{PickConf-Secretary}{$R,t$}
\State \Comment{Add requests whose deadline is reached}
\State $E\gets \{\langle c_H,t_D\rangle\in R|t_D\geq t\}$
\State \Comment{Remove requests from pending set}
\State $R\gets R\setminus E$
\State \Comment{Iterate over requests}
\For{$r=\langle c_H,t_D\rangle\in R$}
\State \Comment{Calculate re-configuration cost savings}
\State $s\gets$\Call{CostSavings}{$r,E$}
\State \Comment{Retrieve maximal savings so far}
\State $m\gets S(r)$
\State \Comment{Should we evaluate?}
\If{$t-(t_D-\delta)\geq\delta/e\wedge s>m$}
\State $E\gets E\cup\{r\}$
\EndIf
\State \Comment{Update maximally possible savings}
\State $S(r)\gets\max(m,s)$
\EndFor
\State \Return{$E$}
\EndFunction
\end{algorithmic}
\end{small}
\end{algorithm}


We present two strategies for selecting configurations to evaluate \revision{(invoked in Line~12 of Algorithm~\ref{alg:evalVar} and represented as Component~B in Figure~\ref{fig:process})}. Algorithm~\ref{alg:pickConfigurations} shows corresponding pseudo-code. The two functions represented in Algorithm~\ref{alg:pickConfigurations} (Function~\textproc{PickConf-Threshold} or \textproc{PickConf-Secretary}) implement the call in Line~12 of Algorithm~\ref{alg:evalVar}. Next, we discuss the two strategies in more detail.

The first strategy, represented by Function~\textproc{PickConf-Threshold}, is relatively simple. We select all pending evaluation requests for processing if their number has reached a threshold. This threshold is represented as parameter $\rho$ in the pseudo-code. Before reaching the threshold, we simply collect evaluation requests without actually processing them (i.e., the set of selected requests is empty). By evaluating requests in batches, we hope to amortize re-configuration overheads via the planning mechanisms outlined in the next \revision{subsection}. The threshold $\rho$ is a tuning parameter. It is associated with a tradeoff. Choosing $\rho$ too small reduces chances for cost amortization. Choosing $\rho$ too large means that we introduce significant delays for the RL algorithm between a configuration is selected and evaluated. Delaying feedback may increase time spent in exploring uninteresting parts of the search space. Note that $\rho$ must be smaller or equal to the maximal delay, allowed by the RL algorithm. In our experiments, we typically set $\rho$ to 20. Empirically, we determined this setting to work well for many scenarios. 

Our second strategy, written as Function~\textproc{PickConf-Secretary}, is more sophisticated and often works better in practice. It is motivated by algorithms for solving the so called ``Secretary Problem''~\cite{Hill2009}. This problem models a job interview for a single position, in which a hiring decision must be made directly after each interview. This decision is hard due to uncertainty with regards to the quality of the remaining candidates. A popular algorithm for this problem reviews a fraction of $1/e$ of candidates without hiring any. Then, it selects the first candidate better than all previously seen candidates (or the last candidate, if no such candidate emerges). It can be shown that this strategy makes a near-optimal choice likely. We use an adaption of this algorithm for our problem.

In our case, candidates correspond to evaluation times for a fixed configuration. The re-configuration cost, required to test a specific configuration, decreases if similar configurations were evaluated before. E.g., we do not have to create an expensive index, part of a configuration to evaluate, if that index was created before. So, instead of immediately evaluating a configuration, we may want to wait until similar configurations are requested. Of course, we cannot know precisely which configurations will be submitted for evaluation in the future. This is akin to the uncertainty about the quality of future job candidates. 

Algorithm~\ref{alg:pickConfigurations} keeps track of possible cost savings for specific configurations. Global variable $S$ keeps track of maximal savings in re-configuration costs for specific configurations, over different invocations of \textproc{PickConf-Secretary}. We compare current cost savings to the maximum seen so far to decide when to evaluate. Intuitively, we want to evaluate configurations in invocations, during which we can obtain particularly high cost savings. 

Function~\textproc{PickConf-Secretary} first selects all evaluation requests whose evaluation deadline has been reached (Line~18). Then, we iterate over the remaining requests. For each request, we calculate re-configuration cost savings, assuming that we evaluate it after the configurations selected for evaluation so far. Next, we retrieve maximal cost savings observed for this configuration so far (Line~26). We select the configuration for evaluation if we have observed cost savings over a sufficiently large period (condition $t-(t_D-\delta)\geq \delta/e$ where $\delta$ is the maximal delay and $e$ Euler's number) and if current savings exceed the previous optimum (condition $s>m$). 

\subsection{Optimizing Evaluation Order}
\label{sub:ordering}

Given a set of configurations to evaluate, \revision{we re-order them to minimize evaluation overheads (invoked in Line~16 of Algorithm~\ref{alg:evalVar} and represented as Component~C in Figure~\ref{fig:process}). The following example illustrates the principle.} 


\begin{example}
We describe configurations by vectors in which each vector component represents a parameter value. Assume we have to evaluate configurations $(1,1,16MB)$, $(0,0,12MB)$, and $(0,1,16MB)$. Here, the first two components indicate whether two specific indexes are created or not, the third component represents the (configurable) amount of working memory. Assume that the latter parameter requires a server restart with a duration of 10 seconds to take effect. For simplicity, we assume that creating an index takes 20 seconds while dropping one is free. Evaluating the configurations in the given order creates (pure configuration switching) overheads of $2\cdot 20+10+10+20+10=90$ seconds (assuming that no indexes are initially created and an initial setting of $8MB$ for memory). If we evaluate them in the order $(0,0,12MB)$, $(0,1,16MB)$, and $(1,1,16MB)$ instead, those overheads reduce to $10+10+20+20=60$ seconds. Relative savings tend to increase with the size of configuration batches.
\end{example}

\revision{We introduce the associated optimization problem formally.}

\begin{definition}\label{def:orderingproblem}
An instance of \textbf{Reconfiguration Cost Minimization} is defined by a set $R=\{r_i\}$ of requested (heavy parameter) configurations to evaluate and a cost function $\mathcal{C}:R\times R\mapsto\mathbb{R}^+$ that maps a pair $r_1$, $r_2$ of requested configurations to the cost for switching from $r_1$ to $r_2$ (e.g., by creating indexes that appear in $r_2$ but not in $r_1$). A solution is a permutation $\Pi:\mathbb{N}\mapsto R$ of configurations, representing evaluation order with cost $\sum_i\mathcal{C}(\Pi(i),\Pi(i+1))$. An optimal evaluation order minimizes cost.
\end{definition}





\revision{In the current implementation, we approximate $\mathcal{C}(r_1,r_2)$ by only considering indexes that appear in $r_2$ but not $r_1$ and summing up the cardinality of the indexed table over all added indexes.} Next, we analyze the computational complexity of this problem (called ``reconfiguration cost minimzation'' in the following).

\begin{theorem}
Reconfiguration cost minimization is NP-hard.
\end{theorem}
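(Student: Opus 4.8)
The plan is to prove NP-hardness via a polynomial-time reduction from a well-known NP-hard problem. Looking at the structure of Reconfiguration Cost Minimization, it asks for a permutation $\Pi$ of the requested configurations minimizing $\sum_i \mathcal{C}(\Pi(i),\Pi(i+1))$, which is precisely the structure of a path-version of the Traveling Salesman Problem (or equivalently the minimum Hamiltonian path problem) on a complete directed graph whose vertices are the configurations and whose edge weights are given by $\mathcal{C}$. The cost function here is asymmetric in general (switching from $r_1$ to $r_2$ creating indexes present in $r_2$ but not $r_1$ need not equal the reverse), so the natural target is the \emph{asymmetric} TSP path variant. The hard part will be to show that the specific, restricted cost functions $\mathcal{C}$ that actually arise from index-creation scenarios are still rich enough to encode an NP-hard problem — we cannot simply cite TSP-hardness for arbitrary weights, because our weights are constrained (they come from set-difference-of-indexes structure with nonnegative per-index creation costs).

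First I would formalize the combinatorial core. I would represent each heavy configuration $r_i$ as the subset $X_i \subseteq U$ of indexes it creates, where $U$ is the universe of candidate indexes, and set $\mathcal{C}(r_1,r_2) = \sum_{x \in X_2 \setminus X_1} w(x)$ with $w(x)>0$ the (table-cardinality) creation cost of index $x$. The key observation is that the total cost of a permutation telescopes: each index $x$ incurs its cost $w(x)$ every time it newly appears, i.e. at each position where the running configuration switches from not-containing $x$ to containing $x$. So minimizing total reconfiguration cost is equivalent to minimizing $\sum_{x\in U} w(x)\cdot(\text{number of ``up-crossings'' of }x\text{ along the order})$. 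This reformulation is the cleanest way to see the combinatorial structure and to design gadgets.

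Next I would choose a convenient NP-hard source problem and build gadgets in this set-cover-like language. A natural candidate is the \textbf{Hamiltonian Path} problem on an undirected graph $G=(V,E)$: I would create one index per edge, and encode each vertex $v$ as a configuration whose index set consists of exactly the edges incident to $v$. Then an index (edge) $\{u,v\}$ has an up-crossing count that is minimized (to one) precisely when the two configurations for $u$ and $v$ are adjacent in the evaluation order; if $u$ and $v$ are not consecutive, the edge-index gets created, dropped, and recreated, costing extra. Setting all $w(x)=1$, the minimum achievable cost corresponds to an ordering that maximizes the number of consecutive pairs that are graph-adjacent, i.e. a Hamiltonian path. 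I would compute the exact optimal-cost threshold $B$ (the total number of edges plus a penalty term counting non-consecutive incident pairs) such that the instance admits an evaluation order of cost $\le B$ if and only if $G$ has a Hamiltonian path, completing the reduction.

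The main obstacle I anticipate is making the gadget tight: I must verify that \emph{every} order of cost $\le B$ forces graph-adjacent configurations to be consecutive (soundness), not merely that a Hamiltonian path yields a cheap order (completeness). This requires a careful accounting argument over up-crossings — essentially showing that each ``bad'' adjacency in the order strictly increases some edge-index's crossing count, so the bound $B$ is met with equality only by Hamiltonian-path orderings. I would handle this by the telescoping counting identity above and a clean case analysis; if the tightness proves delicate, a fallback is to reduce directly from (metric or asymmetric) TSP-path by realizing the required weight matrix as a sum of set-indicator costs, but I expect the Hamiltonian-path gadget to be both tighter and more transparent for this index-creation setting.
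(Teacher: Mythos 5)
You reduce from the same source problem as the paper (Hamiltonian path), but you take a considerably harder road than the theorem requires. In Definition~\ref{def:orderingproblem} the cost function $\mathcal{C}:R\times R\mapsto\mathbb{R}^+$ is an arbitrary part of the problem input, so the paper simply sets $\mathcal{C}(r_i,r_j)=0$ when $\{i,j\}$ is an edge of $G$ and $1$ otherwise, and observes that a zero-cost evaluation order is exactly a Hamiltonian path. Your insistence on realizing $\mathcal{C}$ through genuine index sets (configurations as sets of incident edges, cost as the weight of newly created indexes) proves a strictly stronger statement -- hardness even for the structured costs that actually arise in UDO -- and your telescoping ``up-crossing'' identity is the right lens for that. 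But it is optional for the theorem as stated, and it is where your argument currently has a hole.

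The hole is precisely the tightness issue you flagged, and it is sharper than ``verify every cheap order is a Hamiltonian path.'' Because the objective $\sum_i\mathcal{C}(\Pi(i),\Pi(i+1))$ charges nothing for the first configuration in the order, an edge $e=\{u,v\}$ whose earlier endpoint sits at position $1$ gets one of its creations for free. Working out your counting identity with unit weights gives total cost $2|E|-A-d_1$, where $A$ is the number of consecutive pairs in the order that are graph-adjacent and $d_1$ is the degree of the vertex placed first. So the optimum maximizes $A+d_1$, not $A$: a non-Hamiltonian order starting at a high-degree vertex ($A\le n-2$, $d_1=\Delta$) can beat or tie a Hamiltonian path whose endpoints both have low degree, and no single threshold $B$ separates the two cases on general graphs. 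The clean repair is to reduce from Hamiltonian path on $3$-regular graphs (still NP-hard), where $d_1=3$ for every order, the cost collapses to $2|E|-A-3$, and cost $\le 2|E|-(n-1)-3$ holds if and only if $G$ has a Hamiltonian path. With that amendment your reduction goes through and yields a stronger result than the paper's; without it, the claimed equivalence between minimum cost and maximum adjacency is false as stated.
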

\begin{proof}
\label{pf:reconfiguration}
Consider an instance of the Hamiltonian graph problem. This instance is described by a graph $G$, the goal is to construct a path visiting each node once. We reduce to reconfiguration cost minimization as follows. For each node $i$ in $G$, we introduce one evaluation request $r_i$. For each pair of nodes $i$ and $j$, connected by an edge in $G$, we set the reconfiguration cost $\mathcal{C}(r_i,r_j)$ to zero, otherwise to one. Assume we find an evaluation order with a reconfiguration cost of zero. In this case, we obtain a Hamiltonian path in the original problem instance (visiting nodes, associated with requests, in the order in which requests are selected for evaluation). As each request is evaluated once, the associated graph node is visited once. As the reconfiguration cost is zero, all visited nodes are connected by edges.
\end{proof}


Hence, we must choose between efficient optimization and guaranteed optimal results. In the following, we present a greedy and an exhaustive algorithm to solve this problem.


\begin{algorithm}[t!]
\caption{\textproc{PlanConf}: Order configurations for evaluation.}
\label{alg:greedyPlan}
\renewcommand{\algorithmiccomment}[1]{// #1}
\begin{small}
\begin{algorithmic}[1]
\State \textbf{Input:} Evaluation requests $R$
\State \textbf{Output:} Requests in suggested evaluation order
\Function{PlanConf-Greedy}{$R$}
\State \Comment{Initialize list of ordered requests}
\State $O\gets[]$
\State \Comment{Iterate over all requests}
\For{$r\in R$}
\State \Comment{Find optimal insertion point}
\State $i\gets\arg\min_{i\in 0,\ldots,|O|}C_R(O[i-1],O[i])+C_R(O[i],O[i+1]))$
\State \Comment{Insert current request there}
\State $O.insert(i,r)$
\EndFor
\State \Return{$O$}
\EndFunction
\end{algorithmic}
\end{small}
\end{algorithm}

Algorithm~\ref{alg:greedyPlan} generates evaluation orders via a simple, greedy approach. The input is a set of evaluation requests (each one referencing a configuration to evaluate). Starting from an empty list, we expand the evaluation order gradually, by adding one more request in each iteration. We insert each request greedily at the position where it leads to minimal reconfiguration overheads. We measure re-configuration overheads via function $C_R(c_1,c_2)$, measuring reconfiguration overheads to move from configuration $c_1$ to configuration $c_2$. Those overheads include for instance index creation overheads for indices that appear in $c_2$ but not in $c_1$. After identifying the position with minimum overheads, we expand the order accordingly.


Next, we show how to transform the problem of ordering evaluations into an integer linear program. After doing so, we can use corresponding solvers to find an optimal solution quite efficiently. Our decision variables are binary: we introduce variables $e_t^r$ to indicate whether request $r$ is evaluated at time $t$. We introduce variables for each request $r\in R$ to evaluate and for $|R|$ time steps. We evaluate one configuration at each time step, represented by constraints of the form $\sum_{r}e_t^r=1$ (for each time step $t$). Also, we must evaluate each configuration once which we represent by the constraint $\sum_{t}e_t^r=1$ (for each request $r$)\footnote{Strictly speaking, the last constraint is redundant as we evaluate exactly one configuration in each times step.}. The objective function is determined by reconfiguration costs. For each pair of configuration requests $r_1$ and $r_2$, we can estimate reconfiguration cost $C_R(r_1,r_2)$ by comparing the associated configurations. We introduce binary variables of the form $i_t^{r_1,r_2}$, indicating whether reconfiguration costs for moving from $r_1$ to $r_2$ is incurred at time $t$. We introduce those variables for each pair of configurations and for each time step. The objective function is given as $\sum_{t,r_1,r_2}c_R(r_1,r_2)\cdot i_t^{r_1,r_2}$ (our goal is to minimize this function). Lastly, we need to ensure that the value assignments for variables $i_t^{r_1,r_2}$ and $e_t^r$ are consistent. Due to the objective function, variables $i_t^{r_1,r_2}$ will be set to zero if possible. Hence, we only must constrain them to one if the context requires it. We do so by introducing constraints of the form $i_t^{r_1,r_2}\geq (e_t^{r_1}+e_{t+1}^{r_2})/2$ for each pair of requests and for each time step. The optimal solution to this linear program describes an optimal evaluation order.

\section{Reinforcement Learning}\label{sec:rlalgs}

UDO uses RL algorithms from the family of Monte Carlo Tree Search (MCTS)~\citep{Coquelin2007a} methods. UDO can be instantiated with different algorithms, for optimizing light and heavy parameters respectively. Our implementation supports multiple algorithms as well. We discuss some of them in the following.

Throughout the pseudo-code presented so far, we used three sub-functions that relate to RL: \textproc{RL.Select}, \textproc{RL.Update}, and \textproc{RL.Optimize}. \revision{Those functions were used in Algorithm~\ref{alg:twoLevelVar} and \ref{alg:evalVar}.} The implementation of those functions depends on the RL algorithm used (as indicated by the $\alg$ parameter). The first function, \textproc{RL.Select}, selects the next action to take, based on algorithm-specific statistics. The second function, \textproc{RL.Update}, updates those statistics based on feedback. Function~\textproc{RL.Optimize} is based on the latter two functions and invokes them repeatedly for optimization.

Next, we show how to implement those functions for one specific RL algorithm. This algorithm follows the \textit{Hierarchical Optimistic Optimization} (HOO)~\citep{bubeck2011a} framework, a generalized version of the well-known UCT  algorithm~\citep{Kocsis2006}. We extend that algorithm with a mechanism for accepting delayed feedback. We call this algorithm \textit{Delayed Hierarchical Optimistic Optimization} (Delayed-HOO). This is the algorithm used for our experiments for optimizing both, heavy and light parameters (when optimizing light parameters, we set the allowed delay to zero). While based closely on existing components for action selection~\citep{ucbv} and delayed feedback management~\cite{Joulani2013a}, the combination of those components is novel. 

First, we discuss Function~\textproc{RL.Select}. If the current state is an end state, this function returns the state representing the default configuration. Otherwise, we use the UCB-V selection policy~\citep{ucbv}, adapted for delayed feedback. Given the state representing the current configuration at time $t$, $c_t$, we choose the action $a_t$ leading to configuration $c_{t+1}$ that maximizes the upper confidence bound:
\begin{align}
    c_{t+1} \triangleq \argmax_c \hat{\mu}_c(t)+\sqrt{2.4 \hat{\sigma}_c^2(t) \frac{\log(v_{c_t})}{v_c}} + \frac{3 b \log(v_{c_t})}{v_c},\label{eq:delayeducbv}
\end{align}
Here, $v_{c_t}$ and $v_c$ are the number of visits to the parent configuration $c_t$ and child configuration $c$ respectively. The average reward obtained till time $t$ after considering delay $\tau$, i.e. $\hat{\mu}_c(t) = \sum_{i=\tau}^t f(c_{i-\tau})\mathbbm{1}(c_{i-\tau}=c)$.
Similarly, $\hat{\sigma}_c^2(t)$ is the empirical variance of reward for configuration $c$ after considering the delay $\tau$. As a practical alternative to the aforementioned estimates, our implementation also supports another estimate of average and variance of reward, following the RAVE (Rapid Action Value Estimation)~\cite{Gelly2007a} approach. This approach shares reward statistics for the same action, invoked in different states, thereby obtaining quality estimates faster. It is known to work well for particularly large search spaces. 

Function~\textproc{RL.Update} updates all of the aforementioned statistics, based on reward values received. More precisely, we update the number of visits to state-action pair $(c_t, a_t)$, present state $c_{t+1}$, and sample mean and variance of accumulated rewards ($\hat{\mu}(c_t, a_t)$ and $\hat{\sigma}^2(c_t, a_t)$). Algorithm~\ref{alg:mcts} shows simplified pseudo-code for Function~\textproc{RL.Optimize}. Given a start state and a search space, this function iterates until a timeout. In each iteration, it selects actions via Function~\textproc{RL.Select} (discussed before), evaluates the performance impact on benchmark $B$, and updates statistics accordingly (using Function~\textproc{RL.Update}). It returns the most promising configuration found until the timeout.




\begin{algorithm}[t!]
\caption{RL: Monte Carlo Tree Search optimization.}
\label{alg:mcts}
\renewcommand{\algorithmiccomment}[1]{// #1}
\begin{small}
\begin{algorithmic}[1]
\State \textbf{Input:} Algorithm $\alg$, configuration space $C$, state $c_0$, benchmark $B$
\State \textbf{Output:} Final parameter configuration
\Function{RL.Optimize}{$\alg, C, c_0$}
\State Initialize $Stat \gets \emptyset$
\For{$t=0, \ldots, \alg.Time$}
\State $\langle c_{t+1}, a_t \rangle\gets$ \Call{RL.Select}{$\alg, C, c_t$}
\State Evaluate the new configuration $r_t \gets$ \Call{B.Evaluate}{$c_{t+1}$}
\State Update $Stat \gets Stat \cup \{\langle c_t, a_t, c_{t+1}, r_t, t\rangle\}$
\State \Call{RL.Update}{$\alg, Stat$}
\EndFor
\State \Return Final parameter configuration $c_T$
\EndFunction
\end{algorithmic}
\end{small}
\end{algorithm}

\section{Theoretical Analysis}
\label{sec:analysis}


We show, under moderately simplifying assumptions, that UDO converges to optimal configurations. UDO uses an extension of HOO algorithm, which provides this type of guarantee (Theorem 6,~\citep{bubeck2011a}). However, we decompose our search space (into a space for heavy and one for light parameters) and delay evaluation feedback (to amortize re-configuration costs). In this section, we sketch out our reasoning for why those changes do not prevent convergence. We provide \revision{proofs for the following theorems online\footnote{\url{https://www.cs.cornell.edu/database/supplementary_proofs.pdf}}}. 

In doing so, we use \textit{expected regret}~\citep{Auer2002} as the metric of convergence. Given a time horizon $T$, expected regret $\mathbb{E}[\reg_T]$ is the sum of differences between the expected performance of the optimal configuration and the configuration achieved by the algorithm at any time step $t \leq T$. If the expected regret of an algorithm grows sublinearly with horizon $T$, it means the algorithm asymptotically converges to optimal configuration as $T \rightarrow \infty$.
\begin{theorem}[Regret of HOO (Theorem 6,~\cite{bubeck2011a})]\label{thm:HOO}
If the performance metric $f$ is smooth around the optimal configuration (Assumption 2 in~\citep{bubeck2011a}) and the upper confidence bounds on performances of all the configurations at depth $h$ create a partition shrinking at the rate $c\rho^h$ with $\rho \in (0,1)$ (Assumption 1 in~\citep{bubeck2011a}), expected regret of HOO is
\begin{align}\label{eq:hoo}
    \mathbb{E}[\reg_T] = O\left(T^{1-\frac{1}{d+2}}(\log T)^{\frac{1}{d+2}}\right)
\end{align}
for a horizon $T > 1$, and $4/c$-near-optimality dimension\footnote{$c$-near-optimality dimension is the smallest $d\geq 0$, such that for all $\varepsilon > 0$, the maximal number of disjoint balls of radius $c\varepsilon$ whose centres can be accommodated in $\mathcal{X}_\varepsilon$ is $O(\varepsilon^{-d})$ (Def. 5 in~\citep{bubeck2011a}). Here, $\varepsilon$-optimal configurations $\mathcal{X}_\varepsilon \triangleq \{ x \in C | f(x) \geq f^* - \varepsilon\}$.
Near-optimality dimension encodes the growth in number of balls needed to pack this set of $\varepsilon$-optimal configurations as $\varepsilon$ increases.} $d$ of $f$.
\end{theorem}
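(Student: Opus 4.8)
The plan is to obtain the bound by instantiating the regret analysis of HOO, since the statement is verbatim Theorem~6 of \cite{bubeck2011a}; the work therefore reduces to checking that our hypotheses coincide with their two assumptions and then transporting their argument. First I would set up the hierarchical covering that HOO maintains: an (infinite) binary tree in which each node $(h,i)$ at depth $h$ owns a region $P_{h,i}$ of the configuration space, the regions at each depth refining those of the previous depth. Our hypothesis that the partition shrinks at rate $c\rho^h$ with $\rho\in(0,1)$ is exactly Assumption~1 of \cite{bubeck2011a} (bounded, well-shaped cell diameters), and the smoothness of $f$ around the optimum is their Assumption~2 (weak Lipschitz condition). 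Under these, the B-values computed from the UCB-V-style estimates in Equation~\eqref{eq:delayeducbv} are, with high probability, valid upper confidence bounds on $\sup_{c\in P_{h,i}}\mathbb{E}[f(c)]$; establishing this concentration statement is the first technical step.

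The second step is a regret decomposition by depth. I would fix a cutoff depth $H$ and split the per-step contributions to $\mathbb{E}[\reg_T]$ into nodes of depth greater than $H$ and nodes of depth at most $H$. For deep nodes the diameter bound $c\rho^h$ forces the instantaneous regret incurred inside such a cell to be at most $O(\rho^{H})$, so the whole tail contributes $O(T\rho^{H})$. For shallow nodes I would bound how many can be near-optimal: by the definition of the $4/c$-near-optimality dimension $d$, the number of $\varepsilon$-optimal cells at depth $h$ (taking $\varepsilon\asymp\rho^{h}$) is $O(\rho^{-hd})$. Combining the UCB concentration with a gap argument shows that each such near-optimal suboptimal cell is pulled only $O(\rho^{-2h}\log T)$ times in expectation, while genuinely suboptimal cells outside the near-optimal set are essentially never pulled once the confidence intervals separate.

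Summing the shallow contributions over depths $h\le H$ gives a dominant term $\sum_{h\le H}\rho^{-h(d+1)}\log T \asymp \rho^{-H(d+1)}\log T$ (each of the $O(\rho^{-hd})$ near-optimal cells at depth $h$ being pulled $O(\rho^{-2h}\log T)$ times and incurring per-pull regret $O(\rho^{h})$), and adding the deep tail yields $O(\rho^{-H(d+1)}\log T + T\rho^{H})$. Optimizing the free parameter $H$ to balance the two terms, i.e.\ choosing $\rho^{H}\asymp(\log T/T)^{1/(d+2)}$, collapses the bound to $O\!\left(T^{1-\frac{1}{d+2}}(\log T)^{\frac{1}{d+2}}\right)$, the claimed rate. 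The main obstacle is the concentration-and-counting step: one must track the constant $4/c$ through the weak-Lipschitz inequalities so that the near-optimality-dimension count applies to the correctly enlarged family of cells, and then argue that the expected number of pulls of each suboptimal near-optimal cell is governed by the UCB width before the depth balancing can be performed cleanly. (Extending the same concentration bound to the delay-adjusted estimates $\hat\mu_c,\hat\sigma_c^2$, so that the confidence radius is not inflated beyond the delay-independent order used above, is what lets the identical balancing argument carry over to the delayed setting.)
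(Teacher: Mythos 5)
Your sketch correctly reconstructs the standard HOO regret analysis (covering tree, confidence-bound concentration, depth-cutoff decomposition, near-optimality-dimension counting of $O(\rho^{-hd})$ cells each pulled $O(\rho^{-2h}\log T)$ times, and the balancing $\rho^{H}\asymp(\log T/T)^{1/(d+2)}$), which is exactly the argument of Theorem~6 in \citep{bubeck2011a} that the paper invokes by citation rather than reproving. The only small mismatch is that you anchor the concentration step to the delayed UCB-V index of Eq.~\eqref{eq:delayeducbv}, whereas the cited theorem (and the paper's statement of it, as the text notes explicitly) uses the plain UCB1 index; this is harmless, since the paper itself remarks that the proof technique is the same for any UCB-type bound, and your closing parenthetical correctly isolates the delay-adjusted estimates as a separate extension belonging to the next theorem.
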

Typically, configuration space $C$ is a bounded subset of $\mathbb{R}^P$ and performance metric $f:C \rightarrow [a,b] \subset \mathbb{R}$. Here, $d$ is of the same order as the number of parameters $P$. HOO uses UCB1~\cite{Auer2002} rather than UCB-V~\citep{ucbv}. For brevity of analysis, we follow the same though the proof technique is similar for any UCB-type (Upper Confidence Bound) algorithm.

\subsection{Regret of Delayed-HOO}
Now, we prove that using delayed-UCB1~\citep{Joulani2013a} instead of UCB1 allows us to propose delayed-HOO and also achieves similar convergence properties.
\begin{theorem}[Regret of Delayed-HOO]\label{thm:delayed-HOO}
Under the same assumptions as Thm.~\ref{thm:HOO}, the expected regret of delayed-HOO is 
\begin{align}\label{eq:delayed_hoo}
    \mathbb{E}[\reg_T] = O\left((1+\tau)T^{1-\frac{1}{d+2}}(\log T)^{\frac{1}{d+2}}\right)
\end{align}
for delay $\tau \geq 0$, horizon $T$, and $4/c$-near-optimality dimension $d$ of $f$. 
\end{theorem}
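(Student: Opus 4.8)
The plan is to mirror the regret analysis of HOO from Theorem~\ref{thm:HOO}, substituting every invocation of the UCB1 concentration argument with its delayed analogue from~\citep{Joulani2013a}. The HOO regret bound is assembled from a single structural lemma that controls, for each node of the covering tree, the expected number of times that node is traversed before the algorithm concentrates on the optimal region. My first step is therefore to isolate this per-node pull bound and re-derive it under delayed feedback. Everything else in the Bubeck et al.\ argument (the partition shrinking at rate $c\rho^h$, the smoothness of $f$ near the optimum, and the resulting near-optimality-dimension accounting) depends only on $f$ and the tree geometry, not on when rewards arrive, so those ingredients carry over verbatim.

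The key technical step concerns the effect of delay on the confidence bounds in Eq.~\eqref{eq:delayeducbv}. At any time $t$, the delayed estimator $\hat{\mu}_c(t)$ is built only from the observations that have actually returned, so at most $\tau$ samples for each node are still ``in flight'' and missing from the statistics. Following~\citep{Joulani2013a}, one shows the concentration inequality underlying UCB still holds for the delayed empirical mean, because the pending observations change the effective sample count by at most the additive constant $\tau$. The consequence for a suboptimal node with gap $\Delta$ is that its expected number of selections, which is $O(\log T/\Delta^2)$ without delay, grows by at most an extra $\tau$ pulls: during the window in which feedback is pending, the algorithm may keep revisiting the same node before the arriving reward corrects its index. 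Bookkeeping this per node yields a bound of the form $(1+\tau)\cdot O(\log T/\Delta^2)$, which is where the $(1+\tau)$ prefactor originates.

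Having established the inflated per-node pull bound, the final step is to feed it back into the depth-by-depth summation that produces Eq.~\eqref{eq:hoo}. Since the regret in the HOO analysis is a weighted sum over tree nodes of (per-node reward gap) times (per-node pull count), and only the pull count picks up the multiplicative $(1+\tau)$ factor while the gaps and the $\varepsilon^{-d}$ packing counts are untouched, the entire bound is uniformly scaled by $(1+\tau)$. This reproduces Eq.~\eqref{eq:delayed_hoo} with the same near-optimality dimension $d$ and the same $T^{1-1/(d+2)}(\log T)^{1/(d+2)}$ dependence.

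I expect the main obstacle to be the concentration argument under delay in the adaptive-tree setting. In HOO the set of samples associated with a node is itself random and depends on previous (now delayed) decisions, so the delayed observations form a correlated, randomly-timed subsequence rather than i.i.d.\ draws. Making the delayed confidence bound hold uniformly over all nodes and all times requires a martingale / optional-stopping argument in the spirit of~\citep{Joulani2013a}, and one must check that the adaptive growth of the tree does not invalidate it. A secondary subtlety is reconciling the multiplicative $(1+\tau)$ form stated here with the additive delay penalty that is more common in flat bandit analyses; the multiplicative form follows because, in the regime relevant to HOO, the $\tau$ extra pulls per node can be absorbed into the leading $\log T/\Delta^2$ term rather than appearing as a separate lower-order correction.
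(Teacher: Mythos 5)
Your proposal follows essentially the same route as the paper: the paper also proves the result by establishing a delayed-UCB-type per-arm bound in the style of~\citep{Joulani2013a} (where the delay contributes an additive $\tau$ extra pulls per suboptimal arm, absorbed into a multiplicative $(1+\tau)$ factor) and then plugging that bound into the node-by-node accounting of Theorem~6 of~\citep{bubeck2011a}, leaving the smoothness and near-optimality-dimension arguments untouched. Your flagged subtleties (the adaptive, correlated sample sequence and the additive-to-multiplicative conversion) are the right ones, and the argument is sound.
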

The bound in Eq.~\eqref{eq:delayed_hoo} is the same as Eq.~\eqref{eq:hoo} with an additional factor $(1+\tau)$, which does not change the convergence in terms of $T$. For delay $\tau = 0$, we retrieve the regret bound of original HOO.

The expected error in estimated expected performance (or reward) of any given configuration at time $T$ is $r(T) = \mathbb{E}[f^* - \hat{f}(c_T)] = \frac{1}{T}\mathbb{E}[\reg_T]$.
Thus, the expected error $\epsilon(T)$ in estimating the expected performance (or reward) of a configuration using delayed-HOO converges at the rate $O\left((1+\tau) \left[\log T/T\right]^{1/(d+2)}\right)$, where $T$ is the number of times the configuration is evaluated.

\subsection{Regret of UDO}
As we have obtained the error bound of the delayed-HOO algorithm, now we can derive bounds for UDO when using delayed-HOO for heavy and light parameters with two different delays.

\begin{theorem}[Regret of UDO]\label{thm:udo}
If we use the delayed-HOO as the delayed-MCTS algorithm with delays $\tau$ and $0$, and time-horizons $T_h$ and $T_l$ for heavy and light parameters respectively, the expected regret of UDO is upper bounded by \begin{align}\label{eq:udo}
    \mathbb{E}[\reg_T] = O\left((1+\tau)T_h^{1-\frac{1}{d+2}}(\mathrm{HOO}^2(T_l) \log T_h)^{\frac{1}{d+2}}\right),
\end{align} 
under the assumptions of Thm.~\ref{thm:HOO}. Here, $\mathrm{HOO}(T_l) \triangleq O\left(\left[\log T_l/T_l\right]^{\frac{1}{d+2}}\right)$.
\end{theorem}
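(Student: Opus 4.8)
The plan is to treat UDO as an outer delayed-HOO instance over the heavy-parameter space $C_H$ whose per-configuration reward is supplied by an inner HOO over the light-parameter space $C_L$, and then to track how the inner estimation error surfaces in the outer regret bound. The reward the outer loop attaches to a heavy configuration $c_H$ is $f(c_H\circ\hat c_L)$, where $\hat c_L$ is the near-optimal light configuration returned by the inner loop after $T_l$ evaluations. Applying Theorem~\ref{thm:delayed-HOO} with delay $0$ and horizon $T_l$, this inner optimization attains expected error $\mathrm{HOO}(T_l)=O((\log T_l/T_l)^{1/(d+2)})$ relative to the true best-completion value $g(c_H)\triangleq\max_{c_L} f(c_H\circ c_L)$. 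Thus the outer loop sees a stochastic, slightly pessimistic estimate of $g(c_H)$ whose bias and fluctuation are both of order $\mathrm{HOO}(T_l)$.

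First I would fix the regret target by writing $f^*-f(c_H\circ\hat c_L)=(f^*-g(c_H))+(g(c_H)-f(c_H\circ\hat c_L))$, so that the total UDO regret splits into the regret the outer HOO incurs against $g$ and an additive term accumulating the inner error. The second term is at most $T_h\cdot\mathrm{HOO}(T_l)$ in expectation, since each inner call contributes error at most $\mathrm{HOO}(T_l)$; I would verify it is dominated by the first term, which holds once the light horizon $T_l$ is chosen large enough relative to $T_h$ (equivalently, one can define UDO's regret directly against the achievable noisy objective $g$, in which case this term is absent).

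The core of the argument is the first term. Here I would re-run the delayed-HOO analysis of Theorem~\ref{thm:delayed-HOO}, but carry the reward variance explicitly instead of absorbing it into the constant. Because the outer loop uses the UCB-V rule of Eq.~\eqref{eq:delayeducbv}, the number of plays of each suboptimal node is governed by its empirical variance $\hat\sigma_c^2$ through a term of the form $(\hat\sigma_c^2/\Delta)\log T_h$; substituting the inner fluctuation $\hat\sigma_c^2=O(\mathrm{HOO}^2(T_l))$ replaces every $\log T_h$ arising from the confidence width by $\mathrm{HOO}^2(T_l)\log T_h$. Aggregating these per-node counts over the HOO partition tree and optimizing over the cutoff depth exactly as in the proofs of Theorems~\ref{thm:HOO} and~\ref{thm:delayed-HOO}, the variance factor is carried through the depth-balancing step and emerges inside the $(\cdot)^{1/(d+2)}$ exponent, yielding $(1+\tau)T_h^{1-1/(d+2)}(\mathrm{HOO}^2(T_l)\log T_h)^{1/(d+2)}$; the $(1+\tau)$ prefactor is inherited unchanged from the delay handling in Theorem~\ref{thm:delayed-HOO}.

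The hard part will be rigorously justifying the treatment of the inner error as a UCB-V variance in the outer loop. Unlike i.i.d. reward noise, this error is produced by an adaptive randomized sub-procedure, it carries a systematic (one-sided) bias as well as a fluctuating component, and it is cached and reused whenever the same heavy configuration is revisited. I would address this by bounding the bias uniformly by $\mathrm{HOO}(T_l)$ and folding it into the gap $\Delta$, bounding the second moment of the fluctuation by $O(\mathrm{HOO}^2(T_l))$ to feed the variance slot, and arguing that reusing saved inner statistics across revisits only increases the effective number of inner evaluations and therefore can only shrink the error, so it never loosens the bound.
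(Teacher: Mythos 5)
Your proposal follows essentially the same route as the paper's argument: treat the light-parameter HOO as an inner estimator whose error $\mathrm{HOO}(T_l)$ enters the outer delayed-HOO over heavy parameters as a variance term $O(\mathrm{HOO}^2(T_l))$ in the delayed UCB-V regret bound (which scales as $(\sigma^2/\Delta)\log T_h + \tau\Delta$ per node), and then carry that factor through the HOO partition-tree aggregation to obtain $(1+\tau)T_h^{1-\frac{1}{d+2}}(\mathrm{HOO}^2(T_l)\log T_h)^{\frac{1}{d+2}}$. Your additional care about the additive $T_h\cdot\mathrm{HOO}(T_l)$ bias term and the non-i.i.d., adaptively generated nature of the inner error is a legitimate refinement of the same argument rather than a different approach, and your proposed resolutions (folding the bias into $\Delta$, requiring $T_l$ large enough relative to $T_h$, or measuring regret against the achievable objective) are the right ways to close it.
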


\textit{Deviation in expected performance of the configuration returned by UDO from the optimum is} $O\left((1+\tau)\left[\mathrm{HOO}^2(T_l) \mathrm{HOO}(T_h)\right]^{\frac{1}{d+2}}\right)$. Here, $T_h$ and $T_l$ are the number of steps allotted for the heavy and light parameters respectively. 
Deviation in expected performance of the configuration selected by UDO vanishes as $T_h, T_l \rightarrow \infty$.
%



\definecolor{ddpg}{HTML}{e41a1c}
\definecolor{sarsa}{HTML}{377eb8}
\definecolor{udos}{HTML}{4daf4a}
\definecolor{udo}{HTML}{984ea3}
\definecolor{ottertuneddpg}{HTML}{a65628}
\definecolor{ottertunegp}{HTML}{f781bf}

\section{Experiment Evaluation}
\label{sec:experiments}


After describing our experimental setup (in Section~\ref{sub:expsetup}), we compare UDO against several baselines in Section~\ref{sub:baselines}. Then, in Section~\ref{sub:variants}, we evaluate different UDO variants and justify our primary design decisions. \revision{In Section~\ref{sub:scenarios}, we compare tuning approaches in a diverse range of scenarios.}


\subsection{Experimental Setup}
\label{sub:expsetup}

We consider two standard benchmarks, TPC-C (with 10 warehouses and 32 concurrent requests) and TPC-H (with scaling factor one). We maximize throughput for TPC-C and minimize latency for TPC-H. We automatically tune two popular database management systems, MySQL (version 5.7.29) and Postgres (version 10.15), for maximal performance on those benchmarks. Our parameters include indexing choices (we consider index candidates that are referenced in queries), DBMS configuration parameters, as well as query order in transaction templates (for TPC-C). For TPC-C, \cprotect \revision{we sample configuration quality by running a mix of 4\% \verb|STOCK_LEVEL|, 4\% \verb|DELIVERY|, 4\% \verb|ORDER_STATUS|, 43\% \verb|PAYMENT| and 45\% \verb|NEW_ORDER| transactions for five seconds. Also, we reload a fixed TPC-C snapshot every 10~iterations of UDO's main loop. For TPC-H, we evaluate queries.} We consider 100 tuning parameters for MySQL and 105 parameters for Postgres. The majority of parameters (71) relate to indexing decisions, followed by 19 parameters related to reordering (each parameter represents the position of a query within a transaction template~\cite{Yan2016}), and, finally, parameters representing DBMS configuration parameters (10 parameters for MySQL and 15 for Postgres). For TPC-H, we consider 109 parameters for MySQL and 114 parameters for Postgres (99 of them are related to indexes, the other ones represent DBMS configuration parameters). 


UDO itself is implemented in Python~3, using the OpenAI gym framework. It uses Gurobi (version 9) for cost-based planning. We compare UDO against several baselines that apply RL for universal database optimization without specialized treatment for heavy parameters. Those baselines use out of the box learning algorithms, SARSA~\cite{rummery1994line} and Deep Deterministic Policy Gradient (DDPG)~\cite{Lillicrap2016}, provided by the Keras-RL framework~\cite{kerasrl} for Open AI gym. Prior work on database tuning via reinforcement learning~\cite{Li2018, Zhang2019a} has applied the same framework but to more narrowly defined tuning problems. \revision{We also consider a variant of the latter (using UDO's UCT algorithm without evaluation delays or configuration reordering) that exploits cached configurations. Here, we create database copies for each new heavy parameter configuration encountered and reuse previously created configurations, if available. Our cache uses up to 100 such slots, except for experiments with TPC-H with scaling factor ten where we reduce the number of slots to ten due to higher storage consumption per slot. All baselines discussed so far} can optimize the same search space as UDO. In addition, we compare against combinations of tools that are each targeted at specific tuning problems such as index selection, configuration parameter tuning, or query reordering. Here, we compose solutions proposed for sub-problems by different tools, considering \textit{MySQLTuner}~\cite{mysqltuner}, \textit{PGTuner}~\cite{pgtuner}, and \revision{the Gaussian Process and DDPG++ algorithms~\cite{VanAken2021}, as implemented in the \textit{OtterTune}~\cite{CMUDatabaseGroup2020} tool,} for system parameter tuning, \textit{Quro} for selecting query orders~\cite{Yan2016}, and \textit{Dexter}~\cite{dexter} and \textit{EverSQL}~\cite{eversql} for selecting indexes. \revision{When combining tools, we first optimize transaction code, then parameters, and finally index selections.}

Note that UDO uses no prior training data but optimizes from scratch. Hence, we only consider baselines targeted at the same scenario (i.e., no prior training data). 
Unless noted otherwise, we set UDO's delay $\tau=10$ for the heavy parameter MDP and $b=3$ in UCB-V (Eq.~\eqref{eq:delayeducbv}). We allow up to eight actions (i.e., tuning parameter changes compared to the defaults) per episode for TPC-H and up to 13 for TPC-C (four heavy parameter changes).

All of the following experiments were executed on a server with two Intel Xeon Gold 5218 CPUs with 2.3 GHz (32 physical cores), 384~GB of RAM, and 1~TB of hard disk. 

\subsection{Comparison to Baselines}
\label{sub:baselines}

\begin{filecontents*}{mysql_tpcc.csv}
0	    2335	2335	2335	2335	2335	2335	2335
0.25	5424	3535	3939	3173	5351	3608	3045
0.5	    5581	4074	4454	3340	5381	5347	3256
0.75	5581	5034	4948	3491	5381	5396	3567
1	    5604	5034	5033	3963	5382	5467	3869
1.25	5891	5034	5033	4128	5382	5467	4026
1.5	    5891	5034	5129	4128	5382	5467	4672
1.75	5916	5269	5129	4659	5382	5467	5120
2	    5916	5269	5469	4659	5382	5504	5246
2.25	5916	5269	5469	4659	5589	5504	5346
2.5	    5916	5269	5469	5121	5589	5504	5406
2.75	5916	5269	5469	5121	5589	5504	5406
3	    6074	5269	5678	5121	5589	5504	5406
3.5	    6188	5269	5678	5121	5589	5504	5406
4	    6288	5269	5678	5121	5589	5504	5406
\end{filecontents*}

\begin{filecontents*}{postgres_tpcc.csv}
0	13528	13528	13528	13528	13528	13528	13528
0.5	13968	13528	13911	13788	13673	13640	13528
1	14060	13579	13971	14023	13973	13700	13628
1.5	14257	13590	14095	14036	14082	13801	13899
2	14591	13789	14096	14106	14282	13801	14019
2.5	14877	13936	14355	14463	14582	13879	14291
3	14878	14098	14364	14564	14582	13879	14482
3.5	15016	14236	14495	14695	14582	13879	14637
4	15118	14489	14578	14786	14582	13879	14637
\end{filecontents*}

\pgfplotsset{scaled y ticks=false, log ticks with fixed point}

\begin{figure}[t!]
\center
\ref{tpccLegend}
\subfigure[TPC-C performance as a function of optimization time in MySQL.]{
\begin{tikzpicture}
\begin{axis}
[xlabel={Optimization time (h)}, ylabel={Throughput (tx/s)}, width=3.9cm, ylabel near ticks, xlabel near ticks, y label style={font=\small}, x label style={font=\small}, legend entries={UDO, Simplified UDO, DDPG, SARSA, QURO+Dexter+PGTuner, QURO+EverSQL+PG/MS-Tuner, \revision{QURO+Dexter+DDPG++}, \revision{QURO+Dexter+OT GP}, \revision{RL with Cache}}, legend columns=2, legend to name=tpccLegend, legend style={font=\scriptsize}, legend style={cells={align=left,anchor=west}}, legend style={font=\small}, xmax=4, ymin=2330, ymax=6300, ymajorgrids, cycle list name=exotic]
\addplot+[mark size=1.5] table[x index=0, y index=1] {mysql_tpcc.csv};
\addplot+[mark size=1.5] table[x index=0, y index=2] {mysql_tpcc.csv};
\addplot+[mark size=1.5] table[x index=0, y index=3] {mysql_tpcc.csv};
\addplot+[mark size=1.5] table[x index=0, y index=4] {mysql_tpcc.csv};
\addplot+[only marks, mark size=4, mark=x, color=black] coordinates {(0, 0)};
\addplot+[only marks, mark size=4, mark=x, color=green] coordinates {(0, 3583.68)};
\addplot+[mark size=1.5] table[x index=0, y index=5] {mysql_tpcc.csv};
\addplot+[mark size=1.5] table[x index=0, y index=6] {mysql_tpcc.csv};
\addplot+[mark size=1.5] table[x index=0, y index=7] {mysql_tpcc.csv};
\end{axis}
\end{tikzpicture}
\label{fig:mysql_tpcc}
} 
\vspace{-1em}
\subfigure[TPC-C performance as a function of optimization time in Postgres.]{
\begin{tikzpicture}
\begin{axis}[xlabel={Optimization time (h)}, ylabel={Throughput (tx/s)}, width=3.9cm, ylabel near ticks, xlabel near ticks, y label style={font=\small}, x label style={font=\small}, xmax=4, ymin=13500, ymax=15200, ymajorgrids, cycle list name=exotic]
\addplot+[mark size=1.5] table[x index=0, y index=1] {postgres_tpcc.csv};
\addplot+[mark size=1.5] table[x index=0, y index=2] {postgres_tpcc.csv};
\addplot+[mark size=1.5] table[x index=0, y index=3] {postgres_tpcc.csv};
\addplot+[mark size=1.5] table[x index=0, y index=4] {postgres_tpcc.csv};
\addplot+[only marks, mark size=4, mark=x, color=black] coordinates {(0, 13897.07536)};
\addplot+[only marks, mark size=4, mark=x, color=green] coordinates {(0, 13700.76)};
\addplot+[mark size=1.5] table[x index=0, y index=5] {postgres_tpcc.csv};
\addplot+[mark size=1.5] table[x index=0, y index=6] {postgres_tpcc.csv};
\addplot+[mark size=1.5] table[x index=0, y index=7] {postgres_tpcc.csv};
\end{axis}
\end{tikzpicture}
\label{fig:postgres_tpcc}
}
\caption{Comparing UDO to baselines on TPC-C.\label{fig:tpcc_benchmarks}}
\vspace{-1.5em}
\end{figure}

\begin{filecontents*}{postgres_tpch.csv}
0	20.7	20.7	20.7	20.7	20.7	20.7	20.7
0.5	15.54	16.66	18.21	18.09	14.95	15.76	18.87294009
1	14.59	16.35	17.91	17.71	14.79	15.76	17.92903757
1.5	13.3	15.73	16.24	17.12	14.17	15.53	17.92903757
2	13.07	15.73	16.24	17.03	14.17	15.53	16.75174861
2.5	12.91	15.73	15.45	16.11	14.17	14.95	16.67514479
3	12.25	15.15	15.45	15.77	14.17	14.95	15.93316584
3.5	12.25	15.15	15.26	15.49	14.17	14.95	14.41271036
4	11.84	15.15	15.26	15.04	14.17	14.95	14.41271036
\end{filecontents*}


\begin{filecontents*}{mysql_tpch.csv}
0	68.11	68.11	68.11	68.11	68.11	68.11	68.11
0.5	65.45	66.12	67.34	65.8	65.06	65.577	67.23
1	64.53	65.01	67.1	64.38	63.04	63.537	66.52
1.5	63.91	64.72	66.87	63.81	63.04	63.537	66.52
2	62.28	63.87	64.8	63.63	63.04	63.537	63.42
2.5	61.94	63.66	64.4	62.85	63.04	63.537	63.42
3	61.35	62.66	64.16	62.35	63.04	63.537	63.42
3.5	60.97	62.66	63.93	62.35	63.04	63.537	63.42
4	60.2	62.66	63.79	62.35	63.04	63.537	63.42
\end{filecontents*}

\begin{figure}[t!]
\center
\ref{tpchLegend}
\subfigure[TPC-H performance as a function of optimization time in MySQL.]{
\begin{tikzpicture}
\begin{axis}
[xlabel={Optimization time (h)}, ylabel={Best run time(s)}, width=4cm, ylabel near ticks, xlabel near ticks, y label style={font=\small}, x label style={font=\small}, legend entries={UDO, Simplified UDO, DDPG, SARSA, Dexter+PGTuner, EverSQL+PG/MS-Tuner, \revision{Dexter+OT DDPG++}, \revision{Dexter+OT GP}, \revision{RL with Cache}}, legend columns=2, legend to name=tpchLegend, legend style={font=\scriptsize}, legend style={cells={align=left,anchor=west}}, legend style={font=\small}, xmax=4, ymin=60, ymax=70, ymajorgrids, cycle list name=exotic]
\addplot table[x index=0, y index=1] {mysql_tpch.csv};
\addplot table[x index=0, y index=2] {mysql_tpch.csv};
\addplot table[x index=0, y index=3] {mysql_tpch.csv};
\addplot table[x index=0, y index=4] {mysql_tpch.csv};
\addplot+[only marks, mark size=4, mark=x, color=black] coordinates {(0, 0)};
\addplot+[only marks, mark size=4, mark=x, color=green] coordinates {(0, 69.4)};
\addplot table[x index=0, y index=5] {mysql_tpch.csv};
\addplot table[x index=0, y index=6] {mysql_tpch.csv};
\addplot table[x index=0, y index=7] {mysql_tpch.csv};
\end{axis}
\end{tikzpicture}
\label{fig:mysql_tpch}
}
\vspace{-1em}
\subfigure[TPC-H performance as a function of optimization time in Postgres.]{
\begin{tikzpicture}
\begin{axis}[xlabel={Optimization time (h)}, ylabel={Best run time (s)}, width=4cm, ylabel near ticks, xlabel near ticks, y label style={font=\small}, x label style={font=\small}, xmax=4, ymin=10, ymax=20, ymajorgrids, cycle list name=exotic]
\addplot table[x index=0, y index=1] {postgres_tpch.csv};
\addplot table[x index=0, y index=2] {postgres_tpch.csv};
\addplot table[x index=0, y index=3] {postgres_tpch.csv};
\addplot table[x index=0, y index=4] {postgres_tpch.csv};
\addplot+[only marks, mark size=4, mark=x, color=black] coordinates {(0, 16.881)};
\addplot+[only marks, mark size=4, mark=x, color=green] coordinates {(0, 19.96)};
\addplot table[x index=0, y index=5] {postgres_tpch.csv};
\addplot table[x index=0, y index=6] {postgres_tpch.csv};
\addplot table[x index=0, y index=7] {postgres_tpch.csv};
\end{axis}
\end{tikzpicture}
\label{fig:postgres_tpch}
}
\caption{Comparing UDO to baselines on TPC-H.\label{fig:tpch_benchmarks}}
\vspace{-1.5em}
\end{figure}
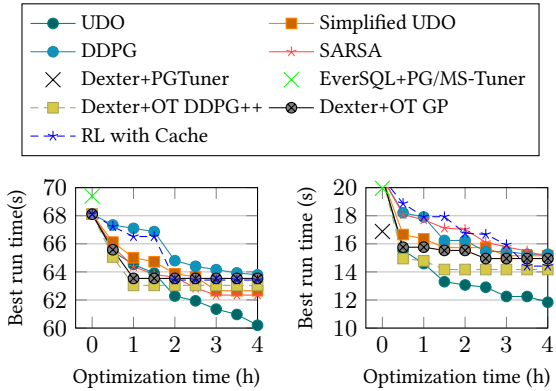

Next, we compare UDO against several baselines. Figure~
\ref{fig:tpcc_benchmarks} reports experimental results for the TPC-C benchmark.  Figure~\ref{fig:tpch_benchmarks} reports results for TPC-H. For TPC-C, we report throughput (of the best configuration found so far) as a function of optimization time. Note that non-iterative baselines are represented as a dot while iterative baselines are represented as a curve. For TPC-H, we report execution time (for TPC-H queries) with the best configuration as a function of optimization time. We optimize for four hours with all baselines. Within the search space defined by our tuning parameters, UDO finds the best configurations for all four combinations of systems and baselines. Generally, the approaches based on reinforcement learning eventually find better solutions than the non-iterative methods. \revision{Among them, UDO performs best, followed in most cases by the combination of DDPG++ (for parameter tuning) and Dexter (for index selection). Configuration caching can sometimes improve over baselines without caches. It is however not as effective as parameter separation and evaluation order optimization, as implemented by UDO.} 

\input{fig/overhead_tpcc}

\begin{filecontents*}{tpch_time.csv}
100	5256	5796	720	2160	1800	3780	1836	3852
200	10296	11412	1368	4356	3096	7416	3096	7416
300	15876	17532	2160	6732	4428	11016	4464	11052
400	21276	23436	2592	8424	5652	14652	5724	14724
500	25956	28656	3024	10260	7128	18252	7200	18324
600	31104	34344	3528	12240	8388	21852	8712	22212
700	36396	40140	4104	14328	9684	25488	9756	25560
800	41544	45864	4572	16272	11016	29088	11124	29232
900	46656	51552	5076	17964	12312	32724	12348	32760
\end{filecontents*}

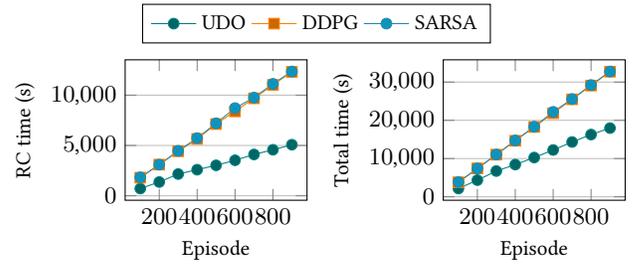
\begin{figure}[t]
\center
\ref{tpchTimeLegendAnalysis}\\
\subfigure[Reconfiguration time of different RL algorithms for Postgres on TPC-H.]{
\begin{tikzpicture}
\begin{axis}[xlabel={Episode}, ylabel={RC time (s)}, width=4cm, ylabel near ticks, xlabel near ticks, y label style={font=\small}, x label style={font=\small}, ymajorgrids, cycle list name=exotic]
\addplot table[x index=0, y index=3] {tpch_time.csv};
\addplot table[x index=0, y index=5] {tpch_time.csv};
\addplot table[x index=0, y index=7] {tpch_time.csv};
\end{axis}
\end{tikzpicture}
\label{fig:tpch_time_index}
}
\vspace{-1em}
\subfigure[Total time of different RL algorithms for Postgres on TPC-H.]{
\begin{tikzpicture}
\begin{axis}
[xlabel={Episode}, ylabel={Total time (s)}, width=4cm, ylabel near ticks, xlabel near ticks, y label style={font=\small}, x label style={font=\small}, legend entries={UDO, DDPG, SARSA}, legend columns=4, legend to name=tpchTimeLegendAnalysis, legend style={font=\scriptsize}, legend pos=north east, legend style={font=\small}, ymajorgrids, cycle list name=exotic]
\addplot table[x index=0, y index=4] {tpch_time.csv};
\addplot table[x index=0, y index=6] {tpch_time.csv};
\addplot table[x index=0, y index=8] {tpch_time.csv};
\end{axis}
\end{tikzpicture}
\label{fig:tpch_time_total}
} 
\caption{Time spent per episode by different RL algorithms when optimizing Postgres for TPC-H.\label{fig:reconftpch}}
\vspace{-1.5em}
\end{figure}

Digging deeper, we analyzed how different RL algorithms spend their time during optimization. Figure~\ref{fig:reconftpcc} shows total (right) and reconfiguration time (left) as a function of the number of episodes for three RL algorithms. Figure~\ref{fig:reconftpch} shows the corresponding numbers for TPC-H. Clearly, UDO iterates faster as it reduces reconfiguration time. For instance, for MySQL running TPC-C, UDO reduces reconfiguration time by a factor of approximately three. This means UDO performs significantly more iterations within the same amount of optimization time, thereby finding promising configurations faster. 

\subsection{Comparison of UDO Variants}
\label{sub:variants}

\begin{filecontents*}{different_delay.csv}
0	2335	2335	2335	2335	2335
0.25	4977	5800.603292	5424	5211	5398
0.5	5046	5800.603292	5581	5408.452972	5467
0.75	5124	5800.603292	5581	5408.452972	5578.016264
1	5240	6032.468261	5604	5684.140518	5699.016264
1.25	5263	6032.468261	5891	5684.140518	5699.016264
1.5	5340	6032.468261	5891	5684.140518	5699.016264
1.75	5370	6032.468261	5916	5684.140518	5699.016264
2	5489	6032.468261	5916	5973.684967	5699.016264
2.25	5489	6032.468261	5916	5973.684967	5699.016264
2.5	5489	6032.468261	5916	5973.684967	5699.016264
2.75	5698	6032.468261	5916	5973.684967	5699.016264
3	5698	6032.468261	6074	5973.684967	5699.016264
3.5	5698	6032.468261	6188	6000.520031	5699.016264
4	5698	6236.872871	6288	6066.520031	5699.016264
\end{filecontents*}

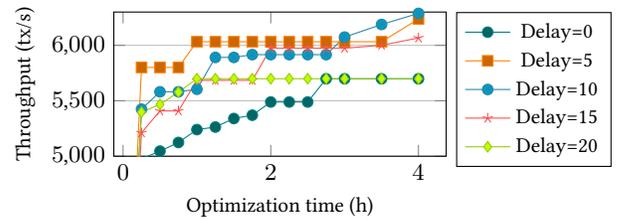
\begin{figure}[t]
\begin{tikzpicture}
\begin{axis}
[xlabel={Optimization time (h)}, ylabel={Throughput (tx/s)}, width=6cm, height=3.5cm, ylabel near ticks, xlabel near ticks, y label style={font=\small}, x label style={font=\small}, legend entries={Delay=0,Delay=5,Delay=10,Delay=15,Delay=20}, legend columns=1, legend pos=outer north east, legend style={font=\scriptsize}, ymin=5000, ymax=6300, legend style={font=\small}, ymajorgrids, cycle list name=exotic]
\addplot table[x index=0, y index=1] {different_delay.csv};
\addplot table[x index=0, y index=2] {different_delay.csv};
\addplot table[x index=0, y index=3] {different_delay.csv};
\addplot table[x index=0, y index=4] {different_delay.csv};
\addplot table[x index=0, y index=5] {different_delay.csv};
\end{axis}
\end{tikzpicture}
\label{fig:tpch_time_total}
\caption{Impact of delayed feedback on UDO performance (MySQL on TPC-C).\label{fig:delay}}
\vspace{-1.5em}
\end{figure}

\begin{filecontents*}{secretary_approach.csv}
3	15766.24443	15186.50837	0.9632292865
5	15230.70567	14766.23248	0.9695041578
8	14178.66005	13586.50837	0.9582364146
10	13680.81109	12281.53543	0.8977198316
15	13473.49938	11576.52432	0.8592069506
\end{filecontents*}

\begin{figure}
\begin{tikzpicture}
\begin{axis}
[xlabel={Delay (Episodes)}, ylabel={RC time (s)}, width=6cm, height=3.5cm, ylabel near ticks, xlabel near ticks, y label style={font=\small}, x label style={font=\small}, legend entries={Batch, Secretary}, legend columns=1, legend style={font=\scriptsize}, legend pos=outer north east, legend style={font=\small}, ymajorgrids, cycle list name=exotic]
\addplot table[x index=0, y index=1] {secretary_approach.csv};
\addplot table[x index=0, y index=2] {secretary_approach.csv};
\end{axis}
\end{tikzpicture}
\label{fig:single_tree_tpcc}
\caption{Impact of evaluation time selection on UDO performance (MySQL on TPC-C).\label{fig:selection}}
\end{figure}
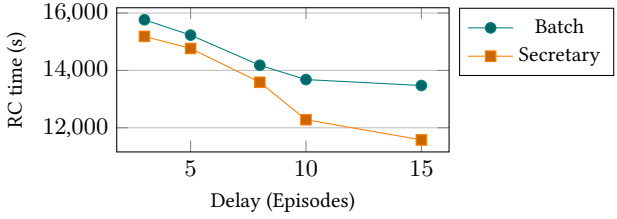

\begin{filecontents*}{different_reorder_approach_tpch.csv}
3	0.004520177841	0.4855957031	30348.53575	30348.53575	17920.15849	17920.15849
5	0.005968093872	5.030477285	28741.13348	28666.01971	16312.75623	16237.64245
8	0.007824897766	40.29769278	27779.39773	26998.49732	15351.02047	14270.12006
10	0.009662389755	231.2361338	26840.51799	25801.63711	14412.14073	13373.25986
12	0.01026988029	635.5380943	26489.06582	24958.19692	13744.80299	12605.70208
15	0.01317977905	2943.056029	25505.8425	24266.52155	13077.46525	11838.1443
\end{filecontents*}

\begin{filecontents*}{different_reorder_approach_tpcc.csv}
3	15767.07372	15766.24443	0.03199958801	4.045243979	42737.07372	42737.07372
5	15267.90536	15230.70567	0.05083966255	19.96444464	42117.90536	42080.70567
8	14304.98206	14178.66005	0.07541918755	537.3491096	39454.98206	39328.66005
10	14174.60911	13680.81109	0.08790063858	1363.758404	39324.60911	38830.81109
15	13965.49538	13473.49938	0.1327893734	13711.45196	39115.49538	38623.49938
\end{filecontents*}

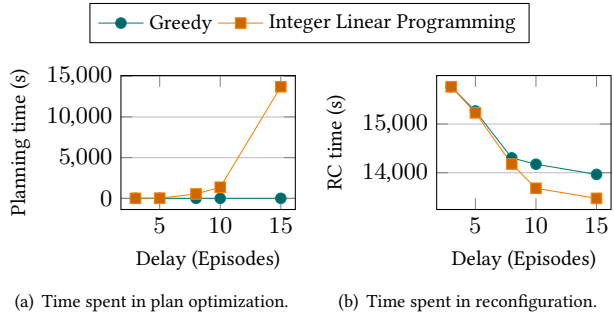
\begin{figure}[t!]
\center
\ref{reorderApproachLegend}\\
\subfigure[Time spent in plan optimization.]{
\begin{tikzpicture}
\begin{axis}[xlabel={Delay (Episodes)}, ylabel={Planning time (s)}, width=3.9cm, ylabel near ticks, xlabel near ticks, y label style={font=\small}, x label style={font=\small}, ymajorgrids, cycle list name=exotic]
\addplot table[x index=0, y index=3] {different_reorder_approach_tpcc.csv};
\addplot table[x index=0, y index=4] {different_reorder_approach_tpcc.csv};
\end{axis}
\end{tikzpicture}
\label{fig:optimization_time_reorder_approach_tpcc}
}
\vspace{-1em}
\subfigure[Time spent in reconfiguration.]{
\begin{tikzpicture}
\begin{axis}
[xlabel={Delay (Episodes)}, ylabel={RC time (s)}, width=3.9cm, ylabel near ticks, xlabel near ticks, y label style={font=\small}, x label style={font=\small}, legend entries={Greedy, Integer Linear Programming}, legend columns=2, legend to name=reorderApproachLegend, legend style={font=\scriptsize}, legend pos=north east, legend style={font=\small}, ymajorgrids, cycle list name=exotic]
\addplot table[x index=0, y index=1] {different_reorder_approach_tpcc.csv};
\addplot table[x index=0, y index=2] {different_reorder_approach_tpcc.csv};
\end{axis}
\end{tikzpicture}
\label{fig:reorder_approach_index_tpcc}
}
\caption{Impact of reconfiguration planning algorithm on UDO performance (MySQL on TPC-C).\label{fig:planner}}
\vspace{-1.5em}
\end{figure}

\begin{filecontents*}{single_tree_tpcc.csv}
0	2335	2335	2335
0.25	5424	3535	5171
0.5	5581	4074	5300
0.75	5581	5034	5494
1	5604	5034	5544
1.25	5891	5034	5544
1.5	5891	5034	5544
1.75	5916	5269	5544
2	5916	5269	5794
2.25	5916	5269	5794
2.5	5916	5269	5794
2.75	5916	5269	5794
3	6074	5269	5794
3.5	6188	5269	5794
4	6288	5269	5794
\end{filecontents*}

\begin{figure}[t]
\centering
\begin{tikzpicture}
\begin{axis}
[xlabel={Optimization time (h)}, ylabel={Throughput (tx/s)}, width=5.75cm, height=3.5cm, ylabel near ticks, xlabel near ticks, y label style={font=\small}, x label style={font=\small}, legend entries={2-Level UDO, 1-Level UDO, 1-Level UDO D}, legend style={font=\scriptsize}, ymin=4000, ymax=6300, legend pos=outer north east, legend style={font=\small}, ymajorgrids, cycle list name=exotic]
\addplot table[x index=0, y index=1] {single_tree_tpcc.csv};
\addplot table[x index=0, y index=3] {single_tree_tpcc.csv};
\addplot table[x index=0, y index=2] {single_tree_tpcc.csv};
\end{axis}
\end{tikzpicture}
\label{fig:single_tree_tpcc}
\caption{Impact of search space design and search strategy on UDO performance (MySQL on TPC-C).\label{fig:search}}
\end{figure}
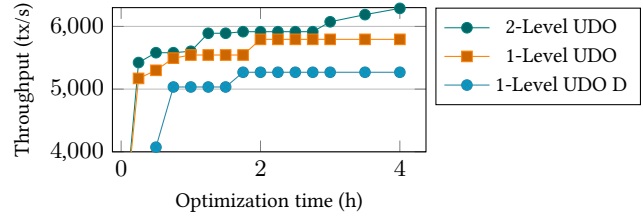


\noindent\textit{Delays.} UDO delays the evaluation of configurations to amortize reconfiguration costs. Of course, there is a trade-off. While delays decrease reconfiguration costs, they may also increase convergence time. Figure~\ref{fig:delay} evaluates UDO with different delay parameters (we measure delay by the maximal number of episodes between request and evaluation). Clearly, disabling delays (Delay=0) leads to slower convergence. Using a delay of five to ten turns out to be the optimal setting (ten is the default setting for our experiments). 

Note that the results in Figure~\ref{fig:delay} relate to the quality of the solution, not to the overheads of optimization. Typically, the rate of improvements slows down as optimization progresses (this effect appears for instance in Figures~\ref{fig:tpcc_benchmarks} and \ref{fig:tpch_benchmarks}). Hence, even small gains in throughput in Figure~\ref{fig:delay} likely translate into significant advantages in terms of optimization time (i.e., optimization time required by weaker approaches to close the gap). 

\noindent\textit{Picking configurations.} Delays are exploited by the evaluation manager to optimize time and order of evaluations. We propose two mechanisms to choose evaluation time. The first one, rather simple, evaluates once the batch of pending evaluation requests reaches a certain size. The second one is more sophisticated and tries to optimize the context in which configurations are evaluated. We compare both methods in Figure~\ref{fig:selection}. Reporting reconfiguration time on the y-axis, we find that ``Secretary-selection'' works best. The gap between the two approaches increases with the delay (a higher delay means more choices in terms of evaluation time). 

\noindent\textit{Ordering configurations.} The second decision made by the evaluation manager relates to the order in which requests, selected for evaluation in a given time slot, are processed. We describe two approaches for request ordering in Section~\ref{sub:ordering}: a simple, greedy algorithm and an approach based on integer linear programming. Figure~\ref{fig:planner} compares the two approaches in terms of optimization time (left) and in terms of reconfiguration time (right), i.e.\ the quality of the generated solution. Clearly, the integer programming approach finds better solutions. The gap increases as the delay (and the number of potential orderings) increases. However, this advantage comes at a steep price. As shown on the left hand side, optimization time increases exponentially and becomes prohibitive for a delay of more than ten (which corresponds to around 100 requests). For our implementation, we switch to the greedy algorithm once delays become prohibitive.

\noindent\textit{1-level vs. 2-level MDP.} Finally, we compare different representations of the search space. Figure~\ref{fig:search} shows corresponding results. Our main version of UDO uses a two-level representation of the search space (separating heavy and light parameter MDPs). In a first step, we remove the separation between the two and apply the same RL algorithm to the 1-Level UDO MDP, introduced in Section~\ref{sec:model}. In a second step, we additionally delay feedback by evaluating configurations only at the end of each episode. Clearly, both of those changes degrade performance, compared to the original version.










\subsection{Scenario Variants}
\label{sub:scenarios}


\begin{filecontents*}{postgres_tpch_sf10.csv}
0	1431.5	1431.5	1431.5	1431.5	1431.5	1431.5	1431.5
1.1	226.4	1321.5	1392.6	1404.6	806.1	989.2	1321.5
2	180.1	1267.8	1324.6	1316.9	219.2	207.3	1226.4
3	178.2	1026.4	1224.6	1167.5	192.2	207.3	705.2
4	168.5	489.5	457.6	1006.8	192.2	199	700.8
5	157.5	489.5	368.9	1006.8	192.2	199	398.7
6	150.9	489.5	350.1	467.5	188.3	181.4	398.7
7	145.8	489.5	287.8	467.5	188.3	181.4	398.7
8	136.8	489.5	287.8	467.5	188.3	181.4	398.7
\end{filecontents*}

\begin{filecontents*}{mysql_tpch_sf10.csv}
0	1112.7	1112.7	1112.7	1112.7	1112.7	1112.7	1112.7
1	439.1	527.9	539.8	547.8	510.6	481.4	493.0
2	425.5	519.3	527.6	517.6	500.5	461.4	488.3
3	413.4	509.4	511.8	508.7	460.3	461.4	488.3
4	407.5	503.5	506.8	498.6	453.8	461.4	471.6
5	402.8	495.6	450.5	479.5	453.8	461.4	471.6
6	397.5	490.3	444.9	479.5	453.8	461.4	471.6
7	389.5	484.6	438.9	479.5	453.8	461.4	463.5
8	378.6	476.9	421.6	479.5	453.8	461.4	463.5
\end{filecontents*}


\begin{figure}[t!]
\center
\ref{tpchLegend_sf10}
\subfigure[TPC-H performance as a function of optimization time in MySQL.]{
\begin{tikzpicture}
\begin{axis}[xlabel={Optimization time (h)}, ylabel={Best run time(s)}, width=4cm, ylabel near ticks, xlabel near ticks, y label style={font=\small}, x label style={font=\small}, legend entries={UDO, Simplified UDO, DDPG, SARSA, Dexter+PGTuner, EverSQL+PG/MS-Tuner, \revision{Dexter+OT DDPG++}, \revision{Dexter+OT GP}, \revision{RL with Cache}}, legend columns=2, legend to name=tpchLegend_sf10, legend style={font=\scriptsize}, legend style={cells={align=left,anchor=west}}, legend style={font=\small}, xmax=8, ymin=370, ymax=560, ymajorgrids, cycle list name=exotic]
\addplot table[x index=0, y index=1] {mysql_tpch_sf10.csv};
\addplot table[x index=0, y index=2] {mysql_tpch_sf10.csv};
\addplot table[x index=0, y index=3] {mysql_tpch_sf10.csv};
\addplot table[x index=0, y index=4] {mysql_tpch_sf10.csv};
\addplot+[only marks, mark size=4, mark=x, color=black] coordinates {(0, 0)};
\addplot+[only marks, mark size=4, mark=x, color=green] coordinates {(0, 496.2)};
\addplot table[x index=0, y index=5] {mysql_tpch_sf10.csv};
\addplot table[x index=0, y index=6] {mysql_tpch_sf10.csv};
\addplot table[x index=0, y index=7] {mysql_tpch_sf10.csv};
\end{axis}
\end{tikzpicture}
\label{fig:mysql_tpch_sf10}
}
\vspace{-1em}
\subfigure[TPC-H performance as a function of optimization time in Postgres.]{
\begin{tikzpicture}
\begin{axis}[xlabel={Optimization time (h)}, ylabel={Best run time (s)}, width=4cm, ylabel near ticks, xlabel near ticks, y label style={font=\small}, x label style={font=\small}, xmax=8, ymode=log, ymin=130, ymax=1440, ytick={150,300,600,1200}, ymajorgrids, cycle list name=exotic]
\addplot table[x index=0, y index=1] {postgres_tpch_sf10.csv};
\addplot table[x index=0, y index=2] {postgres_tpch_sf10.csv};
\addplot table[x index=0, y index=3] {postgres_tpch_sf10.csv};
\addplot table[x index=0, y index=4] {postgres_tpch_sf10.csv};
\addplot+[only marks, mark size=4, mark=x, color=black] coordinates {(0, 213.5)};
\addplot+[only marks, mark size=4, mark=x, color=green] coordinates {(0, 589.1)};
\addplot table[x index=0, y index=5] {postgres_tpch_sf10.csv};
\addplot table[x index=0, y index=6] {postgres_tpch_sf10.csv};
\addplot table[x index=0, y index=7] {postgres_tpch_sf10.csv};
\end{axis}
\end{tikzpicture}
\label{fig:postgres_tpch_sf10}
}
\caption{\revision{Comparing UDO to baselines on TPC-H for SF 10.\label{fig:tpch_sf10_benchmarks}}}
\end{figure}
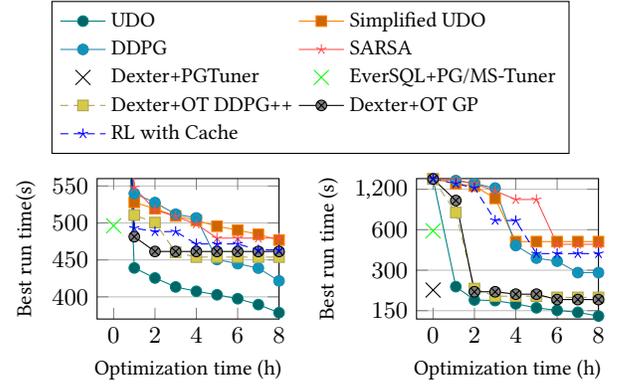

\revision{\noindent\textit{Scaling up.} We increase the scaling factor for TPC-H from one to ten. Figure~\ref{fig:tpch_sf10_benchmarks} reports results for all baselines. The relative tendencies are similar to Figure~\ref{fig:tpch_benchmarks}. However, the spread of run times across different methods is larger. The impact of tuning decisions on performance grows with the data size. For Postgres, at the end of optimization, UDO achieves a 25\% improvement in run time over the second-best baseline (136 versus 181 seconds).}

\input{fig/tpch_space_plus_time}


\begin{filecontents*}{postgres_index_tpch_sf10.csv}
0	1431.5	1431.5	1431.5	1431.5
1.5	250.3	666.4	668.4	252.5
2.5	234.1	463.5	557.9	249.2
3.5	205.6	340.2	398.5	246.4
4.5	185.2	295.6	324.6	246.4
\end{filecontents*}

\begin{filecontents*}{mysql_index_tpch_sf10.csv}
0	1112.7	1112.7	1112.7
1.5	847.4	977.8	987.8
2.5	831.0	958.9	968.9
3.5	827.7	946.8	936.8
4.5	814.1	909.8	923.9
\end{filecontents*}

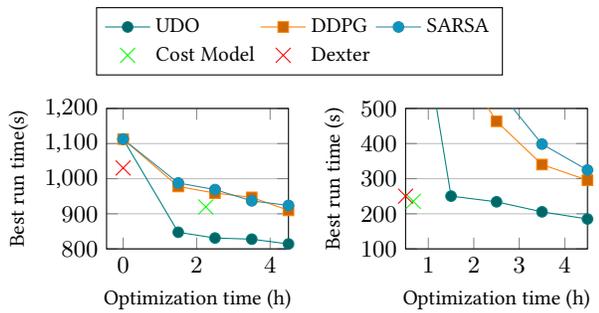
\begin{figure}[t!]
\center
\ref{tpchIndexTuneLegend}
\subfigure[TPC-H performance as a function of optimization time in MySQL.]{
\begin{tikzpicture}
\begin{axis}
[xlabel={Optimization time (h)}, ylabel={Best run time(s)}, width=4cm, ylabel near ticks, xlabel near ticks, y label style={font=\small}, x label style={font=\small}, legend entries={UDO, DDPG, SARSA, Cost Model, Dexter}, legend columns=3, legend to name=tpchIndexTuneLegend, legend style={font=\scriptsize}, legend style={cells={align=left,anchor=west}}, legend style={font=\small}, xmax=4.5, ymin=800, ymax=1200, ymajorgrids, cycle list name=exotic]
\addplot table[x index=0, y index=1] {mysql_index_tpch_sf10.csv};
\addplot table[x index=0, y index=2] {mysql_index_tpch_sf10.csv};
\addplot table[x index=0, y index=3] {mysql_index_tpch_sf10.csv};
\addplot+[only marks, mark size=4, mark=x, color=green] coordinates {(2.25, 919.568)};
\addplot+[only marks, mark size=4, mark=x, color=red] coordinates {(0, 1030.5)};
\end{axis}
\end{tikzpicture}
\label{fig:mysql_index_tpch}
}
\vspace{-1em}
\subfigure[TPC-H performance as a function of optimization time in Postgres.]{
\begin{tikzpicture}
\begin{axis}[xlabel={Optimization time (h)}, ylabel={Best run time (s)}, width=4cm, ylabel near ticks, xlabel near ticks, y label style={font=\small}, x label style={font=\small}, xmin=0.5, xmax=4.5, ymin=100, ymax=500, ymajorgrids, cycle list name=exotic]
\addplot table[x index=0, y index=1] {postgres_index_tpch_sf10.csv};
\addplot table[x index=0, y index=2] {postgres_index_tpch_sf10.csv};
\addplot table[x index=0, y index=3] {postgres_index_tpch_sf10.csv};
\addplot+[only marks, mark size=4, mark=x, color=green] coordinates {(0.67, 235.5)};
\addplot+[only marks, mark size=4, mark=x, color=red] coordinates {(0.5, 250.2)};
\end{axis}
\end{tikzpicture}
\label{fig:postgres_index_tpch}
}
\caption{\revision{Comparing UDO to baselines for index recommendation (TPC-H SF 10).}\label{fig:tpch_index_tune_benchmarks}}
\end{figure}

\revision{\noindent\textit{Index recommendation.} UDO is designed to optimize diverse parameters. Nevertheless, we can use it for more narrow problem variants. We evaluate UDO exclusively for index recommendation in Figure~\ref{fig:tpch_index_tune_benchmarks} (using default settings for all database system parameters). We add a new baseline that exploits the query optimizer's cost model: we generate all index candidates and estimate execution costs (via ``explain'' commands) if subsets of indexes are visible to the optimizer. We consider all subsets of up to three index candidates (the number of indexes selected by UDO in the final configuration). While not particularly efficient (the query optimizers of Postgres and MySQL do not directly support what-if analysis), this process identifies the index set that works best according to the optimizer's cost model. UDO ultimately finds better solutions than the baselines. However, the margins are smaller, compared to Figure~\ref{fig:tpch_sf10_benchmarks}. UDO works best for diverse tuning parameters.}



\begin{filecontents*}{tpch_query_change.csv}
5	185.3	351.8	579.5	278.6	285.4
10	163.7	278.9	490.4	173.9	253.4
15	137.6	284.7	487.6	169.0	157.5
20	129.6	256.4	459.1	161.1	165.6
\end{filecontents*}

\begin{filecontents*}{tpch_workload_shift.csv}
0.33	320.4	198.7	206.7
0.66	217.9	181.8	189.8
1	    87.9	81.6	89.6
1.33	138.6	145.6	153.6
1.66	79.4	76.6	84.6
2	    69.9	68.8	76.8
2.33	110.2	126.4	134.4
2.66	81.8	82.8	90.8
3	    87.7	79.4	87.4
3.33	108.2	134.1	142.1
3.66	64.2	69.9	77.9
4	    62.0	68.5	76.5
4.33	108.6	124.5	132.5
4.66	93.4	81.8	89.8
5	    87.7	80.5	88.5
5.33	124.5	136.1	144.1
5.66	69.9	69.2	77.2
6	    61.7	68.5	76.5
\end{filecontents*}


\begin{figure}[t!]
\center
\ref{dynamicLegend}
\subfigure[Varying number of TPC-H query templates used for training.]{
\begin{tikzpicture}
\begin{axis}
[xlabel={\#Templates for training}, ylabel={Best run time(s)}, ylabel near ticks, xlabel near ticks, y label style={font=\small}, x label style={font=\small}, legend entries={UDO, Dexter+OT DDPG++, Dexter+OT GP}, legend columns=3, legend to name=dynamicLegend, legend style={font=\scriptsize}, legend style={cells={align=left,anchor=west}}, legend style={font=\small}, ymajorgrids, width=3.9cm, cycle list name=exotic]
\addplot table[x index=0, y index=1] {tpch_query_change.csv};
\addplot table[x index=0, y index=4] {tpch_query_change.csv};
\addplot table[x index=0, y index=5] {tpch_query_change.csv};
\end{axis}
\end{tikzpicture}
\label{fig:postgresshiftquery}
} 
\vspace{-1em}
\subfigure[Performance for dynamic workload switching every full hour.]{
\begin{tikzpicture}
\begin{axis}
[xlabel={Opt.\ time (h)}, ylabel={Best run time(s)}, width=3.9cm, ylabel near ticks, xlabel near ticks, y label style={font=\small}, x label style={font=\small},ymajorgrids, ymin=50, ymax=250, cycle list name=exotic]
\addplot table[x index=0, y index=1] {tpch_workload_shift.csv};
\addplot table[x index=0, y index=2] {tpch_workload_shift.csv};
\addplot table[x index=0, y index=3] {tpch_workload_shift.csv};
\end{axis}
\end{tikzpicture}
\label{fig:postgresshiftworkload}
}
\caption{\revision{Performance for non-representative training sets and changing workloads (TPC-H SF 10, Postgres).}\label{fig:postgresdynamic}}
\vspace{-1.5em}
\end{figure}
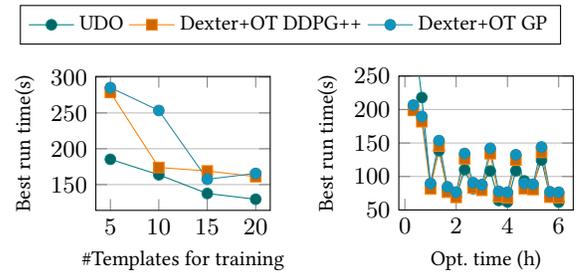


\revision{\noindent\textit{Generalization.} To test generalization, we train UDO and baselines for eight hours on a subset of TPC-H query templates. We show performance of the final configuration for \textit{all queries} in Figure~\ref{fig:postgresshiftquery}. Clearly, training with fewer queries degrades performance on the entire workload. The generalization overheads of UDO are comparable to baselines. E.g., for UDO, performance degrades by about 40\% when considering five instead of 20 templates during training. It is around 70\% for baselines based on Dexter.}


\revision{\noindent\textit{Shifting workload.} In Figure~\ref{fig:postgresshiftworkload}, we report results for a dynamic workload. We switch back between TPC-H query templates with odd numbers (i.e., Q1, Q3, etc.) and templates with even numbers every hour. Figure~\ref{fig:postgresshiftworkload} reports run time for the current half of queries as a function of optimization time. For DDPG++ and OT GP, we use indexes proposed by Dexter for each of the two workload parts. As the indexes proposed by Dexter lead to one problematic query running for more than one hour, we added one more index from the final configuration generated by UDO for the baseline (index on the ``L\_PARTKEY'' column of the ``Lineitem'' table). The presented results therefore correspond to upper bounds on performance for all approaches except for UDO. We see spikes for all baselines, whenever the workload changes. The magnitude of the spikes decreases over time, showing that all approaches converge to a configuration that compromises between the two workload parts. Considering aggregate run times for both workload parts, UDO still performs about 5\% better than the nearest baseline.}

\section{Related Work}
\label{sec:related}

Recently, there has been significant interest in using machine learning for database tuning~\cite{Woltmann,Park2020,Hilprecht2019a,Ma2020,skinnerDB}. Our work falls into the same, broad category as it exploits RL. Prior work typically focuses on specific tuning choices such as system configuration parameters~\cite{Li2018,Zhang2019a,Zhang1910}, index selection~\cite{Sharma2018, Sadri2020}, or data partitioning~\cite{Hilprecht2020, Yang2020}. \revision{Our goal is to support a broad set of tuning choices} via one unified approach. We show, via our experiments, that technical ideas such as parameter partitioning and reconfiguration planning are beneficial in this context.


Traditionally, tuning decisions in a database system are made based on simplifying execution cost models. This often leads to sub-optimal choices in practice~\cite{Borovica2012, Gubichev2015}. UDO does not use any simplifying cost model. Instead, it exclusively uses feedback obtained via trial runs to identify promising configurations. In that, it also differs from a significant fraction of prior work using machine learning for database tuning~\cite{Zhang1910, kipf2018}. Many corresponding approaches rely on a-priori training data, obtained from representative workloads. UDO assumes no prior training data and learns (near-)optimal configurations from scratch. This makes optimization expensive (in the order of hours for our experiments) but avoids generalization errors and the need for training data. UDO will be demonstrated at the upcoming SIGMOD'21 conference~\cite{Wang2021}.




\section{Conclusion}
\label{sec:conclusion}

We presented a system, UDO, for optimizing various tuning parameters by a unified approach. Our experiments show \revision{that parameter separation and delayed learning yield significant improvements.}



\bibliographystyle{ACM-Reference-Format}
\bibliography{references}


\begin{thebibliography}{00}


\ifx \showCODEN    \undefined \def \showCODEN     #1{\unskip}     \fi
\ifx \showDOI      \undefined \def \showDOI       #1{#1}\fi
\ifx \showISBNx    \undefined \def \showISBNx     #1{\unskip}     \fi
\ifx \showISBNxiii \undefined \def \showISBNxiii  #1{\unskip}     \fi
\ifx \showISSN     \undefined \def \showISSN      #1{\unskip}     \fi
\ifx \showLCCN     \undefined \def \showLCCN      #1{\unskip}     \fi
\ifx \shownote     \undefined \def \shownote      #1{#1}          \fi
\ifx \showarticletitle \undefined \def \showarticletitle #1{#1}   \fi
\ifx \showURL      \undefined \def \showURL       {\relax}        \fi
\providecommand\bibfield[2]{#2}
\providecommand\bibinfo[2]{#2}
\providecommand\natexlab[1]{#1}
\providecommand\showeprint[2][]{arXiv:#2}

\bibitem[\protect\citeauthoryear{??}{pgt}{2020}]%
        {pgtuner}
 \bibinfo{year}{2020}\natexlab{}.
\newblock \bibinfo{title}{https://github.com/jfcoz/postgresqltuner}.
\newblock   (\bibinfo{year}{2020}).
\newblock


\bibitem[\protect\citeauthoryear{??}{ker}{2020}]%
        {kerasrl}
 \bibinfo{year}{2020}\natexlab{}.
\newblock \bibinfo{title}{https://github.com/keras-rl/keras-rl}.
\newblock   (\bibinfo{year}{2020}).
\newblock


\bibitem[\protect\citeauthoryear{??}{mys}{2020}]%
        {mysqltuner}
 \bibinfo{year}{2020}\natexlab{}.
\newblock \bibinfo{title}{{https://github.com/major/MySQLTuner-perl}}.
\newblock   (\bibinfo{year}{2020}).
\newblock


\bibitem[\protect\citeauthoryear{??}{dex}{2021}]%
        {dexter}
 \bibinfo{year}{2021}\natexlab{}.
\newblock \bibinfo{title}{https://github.com/ankane/dexter}.
\newblock   (\bibinfo{year}{2021}).
\newblock


\bibitem[\protect\citeauthoryear{??}{eve}{2021}]%
        {eversql}
 \bibinfo{year}{2021}\natexlab{}.
\newblock \bibinfo{title}{https://www.eversql.com/}.
\newblock   (\bibinfo{year}{2021}).
\newblock


\bibitem[\protect\citeauthoryear{Audibert, Munos, and Szepesv{\'a}ri}{Audibert
  et~al\mbox{.}}{2007}]%
        {ucbv}
\bibfield{author}{\bibinfo{person}{Jean-Yves Audibert},
  \bibinfo{person}{R{\'e}mi Munos}, {and} \bibinfo{person}{Csaba
  Szepesv{\'a}ri}.} \bibinfo{year}{2007}\natexlab{}.
\newblock \showarticletitle{Tuning Bandit Algorithms in Stochastic
  Environments}. In \bibinfo{booktitle}{{\em Algorithmic Learning Theory}},
  \bibfield{editor}{\bibinfo{person}{Marcus Hutter}, \bibinfo{person}{Rocco~A.
  Servedio}, {and} \bibinfo{person}{Eiji Takimoto}} (Eds.).
  \bibinfo{publisher}{Springer Berlin Heidelberg}, \bibinfo{address}{Berlin,
  Heidelberg}, \bibinfo{pages}{150--165}.
\newblock


\bibitem[\protect\citeauthoryear{Auer, Cesa-bianchi, and Fischer}{Auer
  et~al\mbox{.}}{2002}]%
        {Auer2002}
\bibfield{author}{\bibinfo{person}{P Auer}, \bibinfo{person}{N Cesa-bianchi},
  {and} \bibinfo{person}{P Fischer}.} \bibinfo{year}{2002}\natexlab{}.
\newblock \showarticletitle{{Finite time analysis of the multiarmed bandit
  problem}}.
\newblock \bibinfo{journal}{{\em Machine Learning\/}} \bibinfo{volume}{47},
  \bibinfo{number}{2-3} (\bibinfo{year}{2002}), \bibinfo{pages}{235--256}.
\newblock


\bibitem[\protect\citeauthoryear{Borovica, Alagiannis, and Ailamaki}{Borovica
  et~al\mbox{.}}{2012}]%
        {Borovica2012}
\bibfield{author}{\bibinfo{person}{Renata Borovica}, \bibinfo{person}{Ioannis
  Alagiannis}, {and} \bibinfo{person}{Anastasia Ailamaki}.}
  \bibinfo{year}{2012}\natexlab{}.
\newblock \showarticletitle{{Automated physical designers: what you see is
  (not) what you get}}. In \bibinfo{booktitle}{{\em Proceedings of the Fifth
  International Workshop on Testing Database Systems}}.
  \bibinfo{pages}{9:1----9:6}.
\newblock
\showISBNx{978-1-4503-1429-9}
\showDOI{%
\url{https://doi.org/10.1145/2304510.2304522}}


\bibitem[\protect\citeauthoryear{Bubeck, Munos, Stoltz, and
  Szepesv{\'a}ri}{Bubeck et~al\mbox{.}}{2011}]%
        {bubeck2011a}
\bibfield{author}{\bibinfo{person}{S{\'e}bastien Bubeck},
  \bibinfo{person}{R{\'e}mi Munos}, \bibinfo{person}{Gilles Stoltz}, {and}
  \bibinfo{person}{Csaba Szepesv{\'a}ri}.} \bibinfo{year}{2011}\natexlab{}.
\newblock \showarticletitle{X-Armed Bandits.}
\newblock \bibinfo{journal}{{\em Journal of Machine Learning Research\/}}
  \bibinfo{volume}{12}, \bibinfo{number}{5} (\bibinfo{year}{2011}).
\newblock


\bibitem[\protect\citeauthoryear{Chaudhuri}{Chaudhuri}{2004}]%
        {Chaudhuri2004}
\bibfield{author}{\bibinfo{person}{Surajit Chaudhuri}.}
  \bibinfo{year}{2004}\natexlab{}.
\newblock \showarticletitle{{Index selection for databases: A hardness study
  and a principled heuristic solution}}.
\newblock \bibinfo{journal}{{\em KDE\/}} \bibinfo{volume}{16},
  \bibinfo{number}{11} (\bibinfo{year}{2004}), \bibinfo{pages}{1313--1323}.
\newblock
\showURL{%
\url{http://ieeexplore.ieee.org/xpls/abs}}


\bibitem[\protect\citeauthoryear{Chaudhuri, Narasayya, and Ramamurty}{Chaudhuri
  et~al\mbox{.}}{2009}]%
        {Chaudhuri2009b}
\bibfield{author}{\bibinfo{person}{Surajit Chaudhuri}, \bibinfo{person}{V
  Narasayya}, {and} \bibinfo{person}{Ravi Ramamurty}.}
  \bibinfo{year}{2009}\natexlab{}.
\newblock \showarticletitle{{Exact cardinality query optimization for optimizer
  testing}}. In \bibinfo{booktitle}{{\em VLDB}}. \bibinfo{pages}{994--1005}.
\newblock
\showISBNx{9781605589480}
\showISSN{2150-8097}
\showDOI{%
\url{https://doi.org/10.14778/1687627.1687739}}


\bibitem[\protect\citeauthoryear{{CMU Database Group}}{{CMU Database
  Group}}{2020}]%
        {CMUDatabaseGroup2020}
\bibfield{author}{\bibinfo{person}{{CMU Database Group}}.}
  \bibinfo{year}{2020}\natexlab{}.
\newblock \bibinfo{title}{https://github.com/cmu-db/ottertune}.
\newblock   (\bibinfo{year}{2020}).
\newblock


\bibitem[\protect\citeauthoryear{Coquelin and Munos}{Coquelin and
  Munos}{2007}]%
        {Coquelin2007a}
\bibfield{author}{\bibinfo{person}{Pierre-Arnaud Coquelin} {and}
  \bibinfo{person}{R{\'{e}}mi Munos}.} \bibinfo{year}{2007}\natexlab{}.
\newblock \showarticletitle{{Bandit Algorithms for Tree Search}}.
\newblock \bibinfo{journal}{{\em Arxiv preprint cs0703062\/}}
  \bibinfo{volume}{23}, \bibinfo{number}{March} (\bibinfo{year}{2007}),
  \bibinfo{pages}{67--74}.
\newblock
\showISBNx{0-9749039-3-0}
\showeprint[arxiv]{arXiv:cs/0703062v1}
\showURL{%
\url{http://arxiv.org/abs/cs/0703062}}


\bibitem[\protect\citeauthoryear{Ding, Das, Marcus, Wu, Chaudhuri, and
  Narasayya}{Ding et~al\mbox{.}}{2019}]%
        {Ding2019}
\bibfield{author}{\bibinfo{person}{Bailu Ding}, \bibinfo{person}{Sudipto Das},
  \bibinfo{person}{Ryan Marcus}, \bibinfo{person}{Wentao Wu},
  \bibinfo{person}{Surajit Chaudhuri}, {and} \bibinfo{person}{Vivek~R.
  Narasayya}.} \bibinfo{year}{2019}\natexlab{}.
\newblock \showarticletitle{{AI meets AI: Leveraging query executions to
  improve index recommendations}}. In \bibinfo{booktitle}{{\em SIGMOD}}.
  \bibinfo{pages}{1241--1258}.
\newblock
\showISBNx{9781450356435}
\showISSN{07308078}
\showDOI{%
\url{https://doi.org/10.1145/3299869.3324957}}


\bibitem[\protect\citeauthoryear{Gelly and Silver}{Gelly and Silver}{2007}]%
        {Gelly2007a}
\bibfield{author}{\bibinfo{person}{Sylvain Gelly} {and} \bibinfo{person}{David
  Silver}.} \bibinfo{year}{2007}\natexlab{}.
\newblock \showarticletitle{{Combining online and offline knowledge in UCT}}.
\newblock \bibinfo{journal}{{\em Proceedings of the 24th international
  conference on Machine learning - ICML '07\/}} (\bibinfo{year}{2007}),
  \bibinfo{pages}{273--280}.
\newblock
\showISBNx{9781595937933}
\showDOI{%
\url{https://doi.org/10.1145/1273496.1273531}}


\bibitem[\protect\citeauthoryear{Gubichev, Boncz, Kemper, and Neumann}{Gubichev
  et~al\mbox{.}}{2015}]%
        {Gubichev2015}
\bibfield{author}{\bibinfo{person}{Andrey Gubichev}, \bibinfo{person}{Peter
  Boncz}, \bibinfo{person}{Alfons Kemper}, {and} \bibinfo{person}{Thomas
  Neumann}.} \bibinfo{year}{2015}\natexlab{}.
\newblock \showarticletitle{{How good are query optimizers, really?}}
\newblock \bibinfo{journal}{{\em PVLDB\/}} \bibinfo{volume}{9},
  \bibinfo{number}{3} (\bibinfo{year}{2015}), \bibinfo{pages}{204--215}.
\newblock


\bibitem[\protect\citeauthoryear{Hill}{Hill}{2009}]%
        {Hill2009}
\bibfield{author}{\bibinfo{person}{Theodore~P. Hill}.}
  \bibinfo{year}{2009}\natexlab{}.
\newblock \showarticletitle{{Knowing when to stop}}.
\newblock \bibinfo{journal}{{\em American Scientist\/}} \bibinfo{volume}{97},
  \bibinfo{number}{2} (\bibinfo{year}{2009}), \bibinfo{pages}{126--133}.
\newblock
\showISSN{00030996}
\showDOI{%
\url{https://doi.org/10.1511/2009.77.126}}


\bibitem[\protect\citeauthoryear{Hilprecht, Binnig, and R{\"{o}}hm}{Hilprecht
  et~al\mbox{.}}{2019}]%
        {Hilprecht2019a}
\bibfield{author}{\bibinfo{person}{Benjamin Hilprecht},
  \bibinfo{person}{Carsten Binnig}, {and} \bibinfo{person}{Uwe R{\"{o}}hm}.}
  \bibinfo{year}{2019}\natexlab{}.
\newblock \showarticletitle{{Towards learning a partitioning advisor with deep
  reinforcement learning}}.
\newblock \bibinfo{journal}{{\em SIGMOD\/}} (\bibinfo{year}{2019}).
\newblock
\showISBNx{9781450368025}
\showISSN{07308078}
\showDOI{%
\url{https://doi.org/10.1145/3329859.3329876}}
\showeprint[arxiv]{arXiv:1904.01279v1}


\bibitem[\protect\citeauthoryear{Hilprecht, Binnig, and R{\"{o}}hm}{Hilprecht
  et~al\mbox{.}}{2020}]%
        {Hilprecht2020}
\bibfield{author}{\bibinfo{person}{Benjamin Hilprecht},
  \bibinfo{person}{Carsten Binnig}, {and} \bibinfo{person}{Uwe R{\"{o}}hm}.}
  \bibinfo{year}{2020}\natexlab{}.
\newblock \showarticletitle{{Learning a Partitioning Advisor for Cloud
  Databases}}.
\newblock \bibinfo{journal}{{\em Proceedings of the ACM SIGMOD International
  Conference on Management of Data\/}} (\bibinfo{year}{2020}),
  \bibinfo{pages}{143--157}.
\newblock
\showISBNx{9781450367356}
\showISSN{07308078}
\showDOI{%
\url{https://doi.org/10.1145/3318464.3389704}}


\bibitem[\protect\citeauthoryear{Joulani, Gy{\"{o}}rgy, and Szepesvari}{Joulani
  et~al\mbox{.}}{2013}]%
        {Joulani2013a}
\bibfield{author}{\bibinfo{person}{Pooria Joulani},
  \bibinfo{person}{Andr{\'{a}}s Gy{\"{o}}rgy}, {and} \bibinfo{person}{Csaba
  Szepesvari}.} \bibinfo{year}{2013}\natexlab{}.
\newblock \showarticletitle{{Online learning under delayed feedback}}.
\newblock \bibinfo{journal}{{\em 30th International Conference on Machine
  Learning, ICML 2013\/}} \bibinfo{number}{PART 3} (\bibinfo{year}{2013}),
  \bibinfo{pages}{2503--2511}.
\newblock
\showDOI{%
\url{https://doi.org/10.14288/1.0044651}}
\showeprint[arxiv]{1306.0686}


\bibitem[\protect\citeauthoryear{Kipf, Kipf, Radke, Leis, Boncz, and
  Kemper}{Kipf et~al\mbox{.}}{2018}]%
        {kipf2018}
\bibfield{author}{\bibinfo{person}{Andreas Kipf}, \bibinfo{person}{Thomas
  Kipf}, \bibinfo{person}{Bernhard Radke}, \bibinfo{person}{Viktor Leis},
  \bibinfo{person}{Peter Boncz}, {and} \bibinfo{person}{Alfons Kemper}.}
  \bibinfo{year}{2018}\natexlab{}.
\newblock \showarticletitle{{Learned cardinalities: estimating correlated joins
  with deep learning}}. In \bibinfo{booktitle}{{\em CIDR}}.
\newblock
\showeprint[arxiv]{1809.00677}
\showURL{%
\url{http://arxiv.org/abs/1809.00677}}


\bibitem[\protect\citeauthoryear{Kocsis and Szepesv{\'{a}}ri}{Kocsis and
  Szepesv{\'{a}}ri}{2006}]%
        {Kocsis2006}
\bibfield{author}{\bibinfo{person}{Levente Kocsis} {and} \bibinfo{person}{C
  Szepesv{\'{a}}ri}.} \bibinfo{year}{2006}\natexlab{}.
\newblock \showarticletitle{{Bandit based monte-carlo planning}}. In
  \bibinfo{booktitle}{{\em European Conf. on Machine Learning}}.
  \bibinfo{pages}{282--293}.
\newblock
\showURL{%
\url{http://www.springerlink.com/index/D232253353517276.pdf}}


\bibitem[\protect\citeauthoryear{Li, Zhou, Li, and Gao}{Li
  et~al\mbox{.}}{2018}]%
        {Li2018}
\bibfield{author}{\bibinfo{person}{Guoliang Li}, \bibinfo{person}{Xuanhe Zhou},
  \bibinfo{person}{Shifu Li}, {and} \bibinfo{person}{Bo Gao}.}
  \bibinfo{year}{2018}\natexlab{}.
\newblock \showarticletitle{{QTune: A QueryAware database tuning system with
  deep reinforcement learning}}.
\newblock \bibinfo{journal}{{\em PVLDB\/}} \bibinfo{volume}{12},
  \bibinfo{number}{12} (\bibinfo{year}{2018}), \bibinfo{pages}{2118--2130}.
\newblock
\showISSN{21508097}
\showDOI{%
\url{https://doi.org/10.14778/3352063.3352129}}


\bibitem[\protect\citeauthoryear{Lillicrap, Hunt, Pritzel, Heess, Erez, Tassa,
  Silver, and Wierstra}{Lillicrap et~al\mbox{.}}{2016}]%
        {Lillicrap2016}
\bibfield{author}{\bibinfo{person}{Timothy~P. Lillicrap},
  \bibinfo{person}{Jonathan~J. Hunt}, \bibinfo{person}{Alexander Pritzel},
  \bibinfo{person}{Nicolas Heess}, \bibinfo{person}{Tom Erez},
  \bibinfo{person}{Yuval Tassa}, \bibinfo{person}{David Silver}, {and}
  \bibinfo{person}{Daan Wierstra}.} \bibinfo{year}{2016}\natexlab{}.
\newblock \showarticletitle{{Continuous control with deep reinforcement
  learning}}.
\newblock \bibinfo{journal}{{\em 4th International Conference on Learning
  Representations, ICLR 2016 - Conference Track Proceedings\/}}
  (\bibinfo{year}{2016}).
\newblock
\showeprint[arxiv]{1509.02971}


\bibitem[\protect\citeauthoryear{Ma, Ding, Das, and Swaminathan}{Ma
  et~al\mbox{.}}{2020}]%
        {Ma2020}
\bibfield{author}{\bibinfo{person}{Lin Ma}, \bibinfo{person}{Bailu Ding},
  \bibinfo{person}{Sudipto Das}, {and} \bibinfo{person}{Adith Swaminathan}.}
  \bibinfo{year}{2020}\natexlab{}.
\newblock \showarticletitle{{Active Learning for ML Enhanced Database
  Systems}}. In \bibinfo{booktitle}{{\em SIGMOD}}. \bibinfo{pages}{175--191}.
\newblock
\showISBNx{9781450367356}
\showISSN{07308078}
\showDOI{%
\url{https://doi.org/10.1145/3318464.3389768}}


\bibitem[\protect\citeauthoryear{Park, Zhong, and Mozafari}{Park
  et~al\mbox{.}}{2020}]%
        {Park2020}
\bibfield{author}{\bibinfo{person}{Yongjoo Park}, \bibinfo{person}{Shucheng
  Zhong}, {and} \bibinfo{person}{Barzan Mozafari}.}
  \bibinfo{year}{2020}\natexlab{}.
\newblock \showarticletitle{{QuickSel: Quick Selectivity Learning with Mixture
  Models}}. In \bibinfo{booktitle}{{\em SIGMOD}}. \bibinfo{pages}{1017--1033}.
\newblock
\showISBNx{9781450367356}
\showISSN{07308078}
\showDOI{%
\url{https://doi.org/10.1145/3318464.3389727}}
\showeprint[arxiv]{1812.10568}


\bibitem[\protect\citeauthoryear{Rummery and Niranjan}{Rummery and
  Niranjan}{1994}]%
        {rummery1994line}
\bibfield{author}{\bibinfo{person}{Gavin~A Rummery} {and}
  \bibinfo{person}{Mahesan Niranjan}.} \bibinfo{year}{1994}\natexlab{}.
\newblock \bibinfo{booktitle}{{\em On-line Q-learning using connectionist
  systems}}. Vol.~\bibinfo{volume}{37}.
\newblock \bibinfo{publisher}{University of Cambridge, Department of
  Engineering Cambridge, UK}.
\newblock


\bibitem[\protect\citeauthoryear{Sadri, Gruenwald, and Lead}{Sadri
  et~al\mbox{.}}{2020}]%
        {Sadri2020}
\bibfield{author}{\bibinfo{person}{Zahra Sadri}, \bibinfo{person}{Le
  Gruenwald}, {and} \bibinfo{person}{Eleazar Lead}.}
  \bibinfo{year}{2020}\natexlab{}.
\newblock \showarticletitle{{DRLindex: Deep reinforcement learning index
  advisor for a cluster database}}.
\newblock \bibinfo{journal}{{\em ACM International Conference Proceeding
  Series\/}} (\bibinfo{year}{2020}).
\newblock
\showISBNx{9781450375030}
\showDOI{%
\url{https://doi.org/10.1145/3410566.3410603}}


\bibitem[\protect\citeauthoryear{Sharma, Schuhknecht, and Dittrich}{Sharma
  et~al\mbox{.}}{2018}]%
        {Sharma2018}
\bibfield{author}{\bibinfo{person}{Ankur Sharma}, \bibinfo{person}{Felix~Martin
  Schuhknecht}, {and} \bibinfo{person}{Jens Dittrich}.}
  \bibinfo{year}{2018}\natexlab{}.
\newblock \showarticletitle{{The case for automatic database administration
  using deep reinforcement learning}}.
\newblock \bibinfo{journal}{{\em arXiv\/}} (\bibinfo{year}{2018}),
  \bibinfo{pages}{1--9}.
\newblock
\showISSN{23318422}
\showeprint[arxiv]{1801.05643}


\bibitem[\protect\citeauthoryear{Trummer}{Trummer}{2019}]%
        {Trummer2019}
\bibfield{author}{\bibinfo{person}{Immanuel Trummer}.}
  \bibinfo{year}{2019}\natexlab{}.
\newblock \showarticletitle{{Exact cardinality query optimization with bounded
  execution cost}}. In \bibinfo{booktitle}{{\em SIGMOD}}.
  \bibinfo{pages}{2--17}.
\newblock


\bibitem[\protect\citeauthoryear{Trummer, Wang, Maram, Moseley, Jo, and
  Antonakakis}{Trummer et~al\mbox{.}}{2019}]%
        {skinnerDB}
\bibfield{author}{\bibinfo{person}{Immanuel Trummer}, \bibinfo{person}{Junxiong
  Wang}, \bibinfo{person}{Deepak Maram}, \bibinfo{person}{Samuel Moseley},
  \bibinfo{person}{Saehan Jo}, {and} \bibinfo{person}{Joseph Antonakakis}.}
  \bibinfo{year}{2019}\natexlab{}.
\newblock \showarticletitle{{SkinnerDB: regret-bounded query evaluation via
  reinforcement learning}}. In \bibinfo{booktitle}{{\em SIGMOD}}.
  \bibinfo{pages}{1039--1050}.
\newblock


\bibitem[\protect\citeauthoryear{{Van Aken}, Yang, Brillard, Fiorino, Zhang,
  Bilien, and Pavlo}{{Van Aken} et~al\mbox{.}}{2021}]%
        {VanAken2021}
\bibfield{author}{\bibinfo{person}{Dana {Van Aken}}, \bibinfo{person}{Dongsheng
  Yang}, \bibinfo{person}{Sebastien Brillard}, \bibinfo{person}{Ari Fiorino},
  \bibinfo{person}{Bohan Zhang}, \bibinfo{person}{Christian Bilien}, {and}
  \bibinfo{person}{Andrew Pavlo}.} \bibinfo{year}{2021}\natexlab{}.
\newblock \showarticletitle{{An inquiry into machine learning-based automatic
  configuration tuning services on real-world database management systems}}.
\newblock \bibinfo{journal}{{\em Proceedings of the VLDB Endowment\/}}
  \bibinfo{volume}{14}, \bibinfo{number}{7} (\bibinfo{year}{2021}),
  \bibinfo{pages}{1241--1253}.
\newblock
\showISSN{21508097}
\showDOI{%
\url{https://doi.org/10.14778/3450980.3450992}}


\bibitem[\protect\citeauthoryear{Wang, Trummer, and Basu}{Wang
  et~al\mbox{.}}{2021}]%
        {Wang2021}
\bibfield{author}{\bibinfo{person}{Junxiong Wang}, \bibinfo{person}{Immanuel
  Trummer}, {and} \bibinfo{person}{Debabrota Basu}.}
  \bibinfo{year}{2021}\natexlab{}.
\newblock \showarticletitle{{Demonstrating UDO: A Unified Approach for
  OptimizingTransaction Code, Physical Design, and System Parameters via
  Reinforcement Learning}}. In \bibinfo{booktitle}{{\em SIGMOD}}.
\newblock


\bibitem[\protect\citeauthoryear{Woltmann, Hartmann, Thiele, and
  Habich}{Woltmann et~al\mbox{.}}{2019}]%
        {Woltmann}
\bibfield{author}{\bibinfo{person}{Lucas Woltmann}, \bibinfo{person}{Claudio
  Hartmann}, \bibinfo{person}{Maik Thiele}, {and} \bibinfo{person}{Dirk
  Habich}.} \bibinfo{year}{2019}\natexlab{}.
\newblock \showarticletitle{{Cardinality estimation with local deep learning
  models}}. In \bibinfo{booktitle}{{\em aiDM}}.
\newblock
\showISBNx{9781450368025}


\bibitem[\protect\citeauthoryear{Yan and Cheung}{Yan and Cheung}{2016}]%
        {Yan2016}
\bibfield{author}{\bibinfo{person}{Cong Yan} {and} \bibinfo{person}{Alvin
  Cheung}.} \bibinfo{year}{2016}\natexlab{}.
\newblock \showarticletitle{{Leveraging Lock Contention to Improve OLTP
  Application Performance}}. In \bibinfo{booktitle}{{\em VLDBJ}},
  Vol.~\bibinfo{volume}{9}. \bibinfo{pages}{444--455}.
\newblock
\showISSN{21508097}
\showDOI{%
\url{https://doi.org/10.14778/2876473.2876479}}


\bibitem[\protect\citeauthoryear{Yang, Chandramouli, Wang, Gehrke, Li, Minhas,
  Larson, Kossmann, and Acharya}{Yang et~al\mbox{.}}{2020}]%
        {Yang2020}
\bibfield{author}{\bibinfo{person}{Zongheng Yang}, \bibinfo{person}{Badrish
  Chandramouli}, \bibinfo{person}{Chi Wang}, \bibinfo{person}{Johannes Gehrke},
  \bibinfo{person}{Yinan Li}, \bibinfo{person}{Umar~Farooq Minhas},
  \bibinfo{person}{Per~{\AA}ke Larson}, \bibinfo{person}{Donald Kossmann},
  {and} \bibinfo{person}{Rajeev Acharya}.} \bibinfo{year}{2020}\natexlab{}.
\newblock \showarticletitle{{Qd-tree: Learning data layouts for big data
  analytics}}.
\newblock \bibinfo{journal}{{\em arXiv\/}} \bibinfo{number}{2}
  (\bibinfo{year}{2020}), \bibinfo{pages}{193--208}.
\newblock
\showISBNx{9781450367356}
\showISSN{23318422}


\bibitem[\protect\citeauthoryear{Zhang, Aken, Wang, Dai, Jiang, Lao, Sheng,
  Pavlo, and Gordon}{Zhang et~al\mbox{.}}{1910}]%
        {Zhang1910}
\bibfield{author}{\bibinfo{person}{Bohan Zhang}, \bibinfo{person}{Dana~Van
  Aken}, \bibinfo{person}{Justin Wang}, \bibinfo{person}{Tao Dai},
  \bibinfo{person}{Shuli Jiang}, \bibinfo{person}{Jacky Lao},
  \bibinfo{person}{Siyuan Sheng}, \bibinfo{person}{Andrew Pavlo}, {and}
  \bibinfo{person}{Geoffrey~J Gordon}.} \bibinfo{year}{1910}\natexlab{}.
\newblock \showarticletitle{{A demonstration of the OtterTune automatic
  database management system tuning service}}.
\newblock \bibinfo{journal}{{\em VLDB\/}} \bibinfo{volume}{11},
  \bibinfo{number}{12} (\bibinfo{year}{1910}), \bibinfo{pages}{1910--1913}.
\newblock


\bibitem[\protect\citeauthoryear{Zhang, Liu, Zhou, Li, Xiao, Cheng, Xing, Wang,
  Cheng, Liu, Ran, and Li}{Zhang et~al\mbox{.}}{2019}]%
        {Zhang2019a}
\bibfield{author}{\bibinfo{person}{Ji Zhang}, \bibinfo{person}{Yu Liu},
  \bibinfo{person}{Ke Zhou}, \bibinfo{person}{Guoliang Li},
  \bibinfo{person}{Zhili Xiao}, \bibinfo{person}{Bin Cheng},
  \bibinfo{person}{Jiashu Xing}, \bibinfo{person}{Yangtao Wang},
  \bibinfo{person}{Tianheng Cheng}, \bibinfo{person}{Li Liu},
  \bibinfo{person}{Minwei Ran}, {and} \bibinfo{person}{Zekang Li}.}
  \bibinfo{year}{2019}\natexlab{}.
\newblock \showarticletitle{{An end-to-end automatic cloud database tuning
  system using deep reinforcement learning}}. In \bibinfo{booktitle}{{\em
  SIGMOD}}. \bibinfo{pages}{415--432}.
\newblock
\showISBNx{9781450356435}
\showISSN{07308078}
\showDOI{%
\url{https://doi.org/10.1145/3299869.3300085}}


\end{thebibliography}

\appendix

\section{Details on Experimental Setup}
\label{sec:parameter_space}

UDO, Simplified UDO, DDPG, SARSA, RL with Cache, and all UDO variants optimize the following system parameters for Postgres:

\begin{itemize}
    \item \verb|enable_bitmapscan|
    \item \verb|enable_mergejoin|
    \item \verb|force_parallel_mode|
    \item \verb|max_parallel_workers|
    \item \verb|max_parallel_workers_per_gather|
    \item \verb|enable_hashagg|
    \item \verb|enable_indexscan|
    \item \verb|enable_seqscan|
    \item \verb|enable_tidscan|
    \item \verb|enable_nestloop|
    \item \verb|maintenance_work_mem|
    \item \verb|effective_io_concurrency|
    \item \verb|temp_buffers|
    \item \verb|random_page_cost|
    \item \verb|seq_page_cost|
\end{itemize}

For MySQL, those baselines optimize the following parameters:

\begin{itemize}
    \item \verb|innodb_buffer_pool_size|
    \item \verb|max_heap_table_size|
    \item \verb|transaction_prealloc_size|
    \item \verb|transaction_alloc_block_size|
    \item \verb|table_open_cache|
    \item \verb|binlog_cache_size|
    \item \verb|read_buffer_size|
    \item \verb|join_buffer_size|
    \item \verb|thread_cache_size|
    \item \verb|max_length_for_sort_data|
    \item \verb|innodb_thread_concurrency|
\end{itemize}

For TPC-H, all of the aforementioned baselines use an evaluation script (i.e., benchmark metric) that sets per-query timeouts. Those per-query timeouts are initialized to the minimum between the respective query time with default settings and a constant (we use 30 seconds for TPC-H SF 10 and 6 seconds for TPC-H SF 1). Note that, while using those timeouts during optimization, we evaluate configurations without timeouts. I.e., the execution times shown in all plots of our experimental evaluation were generated without timeouts (this means we perform separate runs to evaluate the quality of generated configurations).

The baselines derived from the OtterTune demo~\cite{CMUDatabaseGroup2020} (DDPG++ and GP) use a different evaluation script for TPC-H that does not use per-query timeouts. Specifically for Postgres, we found that not having per-query timeouts leads to extremely long-running queries when changing optimizer cost constants (i.e., parameters \verb|random_page_cost| and \verb|seq_page_cost|). Changing those constants can mislead the query optimizer to select highly sub-optimal join orders. This prevented tuning progress. We therefore removed those two parameters from the search space of those two baselines. 

Also, while the evaluation script used by UDO and variants does not currently support parameters requiring a database server restart, OtterTune does support this functionality. We therefore added the \verb|shared_buffers| parameter to OtterTune's search space. This parameter requires a server restart. It is considered one of few most important tuning knobs by the baseline authors~\cite{VanAken2021}. Empirically, we found that the performance of DDPG++ and GP improves when adding this parameter. 

For MySQL, all baselines tune the same set of parameters.

\section{Additional Performance Metrics}

In Section~\ref{sec:experiments}, we have shown performance according to the metrics that UDO and the baselines optimize for. Often, there is a tradeoff with other metrics. To make those tradeoffs visible, we measure performance according to metrics not considered for optimization. We analyze the final configuration, selected by UDO and several baselines (during the experiments shown in Figures~\ref{fig:tpcc_benchmarks} and \ref{fig:tpch_benchmarks}). We compare according to a variety of performance metrics and report results in Table ~\ref{tab:postgresmoremetrics} and Table ~\ref{tab:mysqlmoremetrics}. We use Postgres and MySQL statistics to measure disk space consumption of indexes, other metrics are measured via the pg-activity tool\footnote{https://github.com/dalibo/pg\_activity}.



\begin{table}[]
\caption{\revision{Performance metrics for final configurations selected by different approaches on Postgres.}}
\label{tab:postgresmoremetrics}
\scalebox{0.75}{
\begin{tabular}{|l|l|l|l|l|}
\hline
Benchmark                  & Approach & Space(MB)  & Memory(GB) & CPU(\%) \\ \hline
\multirow{3}{*}{TPC-H SF10} & Dexter + PGTune   & 3383    & 15.54      & 2954.8  \\ \cline{2-4} 
                           & Dexter + OT GP     & 3383    & 9.26   & 2695.2 \\ \cline{2-5} 
                           & Dexter + OT DDPG++ & 3383    & 15.38      & 2696.7  \\ \cline{2-5} 
                           & UDO                & 2891    & 9.23         & 2497.1  \\ \hline
\multirow{3}{*}{TPC-C SF10} & Dexter + PGTune   & 2881.06   & 12.45     & 2434.2  \\ \cline{2-5} 
                           & Dexter + OT GP & 2881.06 & 11.98 &	2479.8 \\ \cline{2-5} 
                           & Dexter + OT DDPG++ & 2881.06   & 11.3     & 2461.2  \\ \cline{2-5} 
                           & UDO   & 2606.35    & 11.45     & 2486.78 \\ \hline
\end{tabular}
}
\end{table}

\begin{table}[]
\caption{\revision{Performance metrics for final configurations selected by different approaches on MySQL.}}
\label{tab:mysqlmoremetrics}
\scalebox{0.75}{
\begin{tabular}{|l|l|l|l|l|}
\hline
Benchmark                  & Approach & Space(MB)  & Memory(GB) & CPU(\%) \\ \hline
\multirow{3}{*}{TPC-H SF10} & Dexter + PGTune  &  9.05   &  34.7   & 141.4 \\ \cline{2-4} 
                           & Dexter + OT GP     &  9.05   & 38.8   & 153.4 \\ \cline{2-5} 
                           & Dexter + OT DDPG++ &  9.05   & 43.5  & 167.6   \\ \cline{2-5} 
                           & UDO                &  96   &   48.1      & 165.4  \\ \hline
\multirow{3}{*}{TPC-C SF10} & Dexter + PGTune   & 10.55  &  38.6    &  825 \\ \cline{2-5} 
                           & Dexter + OT GP     & 10.55  & 29.4     & 1598	 \\ \cline{2-5} 
                           & Dexter + OT DDPG++ & 10.55  &  33.5    & 1608 \\ \cline{2-5} 
                           & UDO   &  133.26 &   32.6   &  1798 \\ \hline
\end{tabular}
}
\end{table}

\balance

\end{document}